\DeclarePairedDelimiter{\abs}{\lvert}{\rvert}
\theoremstyle{definition}
\newtheorem{thm}{Theorem}
\newtheorem{example}{Example}
\newtheorem{lem}{Lemma}
\newtheorem{define}{Definition}
\newcommand{\no}{\nonumber}
\begin{document}
\title{Matching Anonymized and Obfuscated\\ Time Series to Users' Profiles}

\author{Nazanin~Takbiri~\IEEEmembership{Student Member,~IEEE,}
        Amir~Houmansadr~\IEEEmembership{Member,~IEEE,}
        Dennis~Goeckel~\IEEEmembership{Fellow,~IEEE,}
        Hossein~Pishro-Nik~\IEEEmembership{Member,~IEEE}
        \thanks{N. Takbiri is with the Department
        of Electrical and Computer Engineering, University of Massachusetts, Amherst,
        MA, 01003 USA e-mail: (ntakbiri@umass.edu).}
        \thanks{A. Houmansadr is with the College of Information and Computer Sciences, University of Massachusetts, Amherst,
        MA, 01003 USA e-mail:(amir@cs.umass.edu)}
        \thanks{H. Pishro-Nik and D. Goeckel are with the Department
        of Electrical and Computer Engineering, University of Massachusetts, Amherst,
        MA, 01003 USA e-mail:(pishro@engin.umass.edu)}
        \thanks{This work was supported by National Science Foundation under grants CCF--0844725, CCF--1421957 and CNS1739462. 
        
        This work was presented in part in IEEE International Symposium on Information Theory (ISIT 2017) \cite{nazanin_ISIT2017}.}}

\maketitle

\begin{abstract}

Many popular applications use traces of user data to offer various services to their users.  However, even if user data is anonymized and obfuscated, a user's privacy can be compromised through the use of statistical matching techniques that match a user trace to prior user behavior. In this work, we derive the theoretical bounds on the privacy of users in such a scenario. We build on our recent study in the area of location privacy,
in which we introduced formal notions of location privacy for anonymization-based location privacy-protection mechanisms.
Here we derive the fundamental limits of user privacy when both anonymization and obfuscation-based protection mechanisms are applied to users' time series of data.
We investigate the impact of such mechanisms on the trade-off between privacy protection and user utility. We first study achievability results for the case where the time-series of users are governed by an i.i.d.\ process. The converse results are proved both for the i.i.d.\ case as well as the more general Markov chain model. We demonstrate that as the number of users in the network grows, the obfuscation-anonymization plane can be divided into two regions: in the first region, all users have perfect privacy; and, in the second region, no user has privacy.
\end{abstract}

\begin{IEEEkeywords}
Anonymization, Obfuscation, Information theoretic privacy, Privacy-Protection Mechanism (PPM), User-Data Driven Services (UDD).
\end{IEEEkeywords}


\section{Introduction}
\label{intro}

A number of emerging systems and applications work by analyzing the data submitted by their users in order to serve them; we call such systems \emph{User-Data Driven} (UDD) services. Examples of UDD services include smart cities, connected vehicles, smart homes, and connected healthcare devices, which have the promise of greatly improving users' lives. Unfortunately, the sheer volume of user data collected by these systems can compromise users' privacy~\cite{FTC2015}. Even the use of standard Privacy-Protection Mechanisms (PPMs), specifically anonymization of user identities and obfuscation of submitted data, does not guarantee users' privacy, as adversaries are able to use powerful statistical inference techniques to learn sensitive private information of the users~\cite{0Quest2016, 3ukil2014iot, 4Hosseinzadeh2014,iotCastle,matching}.

To illustrate the threat of privacy leakage, consider three popular UDD services: (1) {\em Health care:}  Wearable monitors that constantly track user health variables can be invaluable in assessing individual health trends and responding to emergencies.  However, such monitors produce long time-series of user data uniquely matched to the health characteristics of each user; (2) {\em Smart homes:} Emerging smart-home technologies such as fine-grained power measurement systems can help users and utility providers to address one of the key challenges of the twenty-first century:  energy conservation.  But the measurements of power by such devices can be mapped to users and reveal their lifestyle habits; and, (3) {\em Connected vehicles:}  The location data provided by connected vehicles promises to greatly improve everyday life by reducing congestion and traffic accidents.  However, the matching of such location traces to prior behavior not only allows for user tracking, but also reveals a user's habits.  In summary, despite their potential impact on society and their emerging popularity, these UDD services have one thing in common: their utility critically depends on their collection of user data, which puts users' privacy at significant risk.

There are two main approaches to augment privacy in UDD services: \emph{identity perturbation (anonymization)}~\cite{1corser2016evaluating,hoh2005protecting,freudiger2007mix, ma2009location, shokri2011quantifying2, Naini2016,soltani2017towards, soltani2018invisible}, and \emph{data perturbation (obfuscation)}~\cite{shokri2012protecting, gruteser2003anonymous, bordenabe2014optimal}. In anonymization techniques, privacy is obtained by concealing the mapping between users and data, and the mapping is changed periodically to thwart statistical inference attacks that try to de-anonymize the anonymized data traces by matching user data to known user profiles. Some approaches employ $k$-anonymity to keep each user's identity indistinguishable within a group of $k-1$ other users ~\cite{2zhang2016designing,11dewri2014exploiting, gedik2005location, zhong2009distributed, sweeney2002k, kalnis2007preventing,liu2013game}.  Other approaches employ users' pseudonyms within areas called mix-zones~\cite{beresford2003location, freudiger2009optimal, palanisamy2011mobimix}.
Obfuscation mechanisms aim at protecting privacy by perturbing user data, e.g., by adding noise to users' samples of data.   For instance, cloaking replaces each user's sample of data with a larger region~\cite{18shokri2014hiding,8zurbaran2015near,hoh2007preserving, wernke2014classification, chow2011spatial, um2010advanced}, while an alternative approach is to use dummy data in the set of possible data of the users~\cite{kido2005protection, shankar2009privately, chow2009faking, kido2005anonymous, lu2008pad}. In~\cite{randomizedresponse}, a mechanism of obfuscation was introduced where the answer was changed randomly with some small probability. Here we consider the fundamental limits of a similar obfuscation technique for providing privacy in the long time series of emerging applications.


The anonymization and obfuscation mechanisms improve user privacy at the cost of user utility. The anonymization mechanism works by frequently changing the pseudonym mappings of users to reduce the length of time series that can be exploited by statistical analysis.  However, this frequent change may also decrease the usability by concealing the temporal relation between a user's sample of data, which may be critical in the utility of some systems, e.g., a dining recommendation system that makes suggestions based on the dining history of its users. On the other hand, obfuscation mechanisms work by adding noise to users' collected data, e.g., location information. The added noise may degrade the utility of UDD applications.  Thus, choosing the right level of the privacy-protection mechanism is an important question, and understanding what levels of anonymization and obfuscation can provide theoretical guarantees of privacy is of interest.


In this paper, we will consider the ability of an adversary to perform statistical analyses on time series and match the series to descriptions of user behavior. In related work, Unnikrishnan~\cite{matching} provides a comprehensive analysis of the asymptotic (in the length of the time series) optimal matching of time series to source distributions. However, there are several key differences between that analysis and the work here. First, Unnikrishnan~\cite{matching} looks at the optimal matching tests, but does not consider any privacy metrics as considered in this paper, and a significant component of our study is demonstrating that mutual information converges to zero so that we can conclude there is no privacy leakage (hence, ``perfect privacy'').  Second, the setting of~\cite{matching} is different, as it does not consider: (a) obfuscation, which is one of the two major protection mechanisms; and (b) sources that are not independent and identically distributed (i.i.d.). Third, the setting of Unnikrishnan~\cite{matching} assumes a fixed distribution on sources (i.e., classical inference), whereas we assume the existence of general (but possibly unknown) prior distributions for the sources (i.e., a Bayesian setting).  Finally, we study the fundamental limits in terms of both the number of users and the number of observations, while Unnikrishnan~\cite{matching} focuses on the case where the number of users is a fixed, finite value.

Numerous researchers have put forward ideas for quantifying privacy-protection. Shokri et al.~\cite{shokri2011quantifying, shokri2011quantifying2} define the expected estimation error of the adversary as a metric to evaluate PPMs.  Ma et al.~\cite{ma2009location} use uncertainty about users' information to quantify user privacy in vehicular networks.
To defeat localization attacks and achieve privacy at the same time, Shokri et al.~\cite{shokri2012protecting} proposed a method which finds optimal PPM for an LBS given service quality constraints.
In~\cite{6li2016privacy} and~\cite{4olteanu2016quantifying}, privacy leakage of data sharing and interdependent privacy risks are quantified, respectively. A similar idea is proposed in \cite{14zhang2014privacy} where the quantification model is based on the Bayes conditional risk. 
Previously, mutual information has been used as a privacy metric in a number of settings,~\cite{kousha3,salamatian2013hide, csiszar1996almost, calmon2015fundamental, sankar2013utility, sankarISIT,sankar, yamamoto1983source, hyposankar}. However, the framework and problem formulation for our setting (Internet of Things (IoT) privacy) are quite different from those encountered in previous works. More specifically, the IoT privacy problem we consider here is based on a large set of time-series data that belongs to different users with different statistical patterns that has gone through a privacy-preserving mechanism, and the adversary is aiming at de-anonymizing and de-obfuscating the data.

The discussed studies demonstrate the growing importance of privacy. What is missing from the current literature is a solid theoretical framework for privacy that is general enough to encompass various privacy-preserving methods in the literature. Such a framework will allow us to achieve provable privacy guarantees, obtain fundamental trade-offs between privacy and performance, and provide analytical tools to optimally achieve provable privacy. We derive the fundamental limits of user privacy in UDD services in the presence of both anonymization and obfuscation protection mechanisms.  We build on our previous works on formalizing privacy in location-based services~\cite{tifs2016, ciss2017}, but we significantly expand those works here not just in application area but also user models and settings. In particular, our previous works introduced the notion of \emph{perfect privacy} for location-based services, and we derived the rate at which an anonymization mechanism should change the pseudonyms in order to achieve the defined perfect privacy. In this work, we expand the notion of perfect privacy to UDD services in general and derive the conditions for it to hold when \emph{both} anonymization and obfuscation-based protection mechanisms are employed.

In this paper, we consider two models for users' data: i.i.d.\ and Markov chains. After introducing the general framework in Section \ref{sec:framework}, we consider an i.i.d.\ model extensively in Section \ref{perfectsec} and the first half of Section \ref{converse}. We obtain achievability and converse results for the i.i.d.\ model. The i.i.d.\ model would apply directly to data that is sampled at a low rate. In addition, understanding the i.i.d.\ case can also be considered the first step toward understanding the more complicated case where there is dependency, as was done for anonymization-only Location Privacy-Preserving Mechanisms (LPPMs) in \cite{tifs2016}, and will be done in Section \ref{subsec:markov}. In particular, in Section \ref{subsec:markov}, a general Markov chain model is used to model users' data pattern to capture the dependency of the user' data pattern over time. There, we obtain converse results for privacy for this model. In Section \ref{sec:perfect-MC}, we provide some discussion about the achievability for the Markov chain case.

\subsection{Summary of the Results}

Given $n$, the total number of the users in a network, their degree of privacy depends
on two parameters: (1) The number of observations $m=m(n)$ by the adversary per user for a fixed anonymization mapping (i.e., the number of observations before the pseudonyms are changed);  and (2) the value of the noise added by the obfuscation technique (as defined in Section~\ref{sec:framework}, we quantify the obfuscation noise with a parameter $a_n$, where larger $a_n$ means a higher level of obfuscation).  Intuitively, smaller $m(n)$ and larger $a_n$ result in stronger privacy, at the expense of lower utility for the users.

Our goal is to identify values of $a_n$ and $m(n)$ that satisfy perfect privacy in the asymptote of a large number of users ($n \rightarrow \infty$).
When the users' datasets are governed by an i.i.d.\ process, we show that
the $m(n)- a_n$ plane can be divided into two areas.  In the first area, all users have perfect privacy (as defined in Section~\ref{sec:framework}), and, in the second area, users have no privacy.
Figure~\ref{fig:region} shows the limits of privacy in the entire $m(n)- a_n$ plane.
As the figure shows, in regions $1$, $2$, and $3$, users have perfect privacy, while in region $4$ users have no privacy.
\begin{figure}[h]
	\centering
	\includegraphics[width=0.7\linewidth, height=0.6 \linewidth]{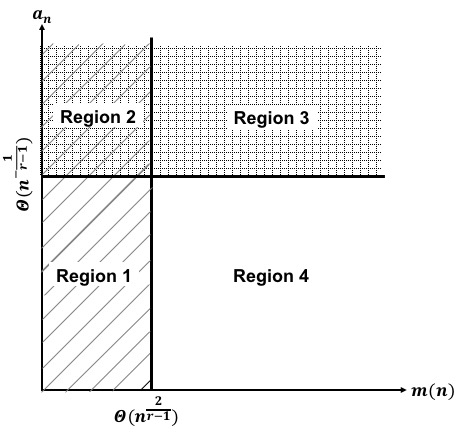}
	\caption{Limits of privacy in the entire $m(n)-a_n$ plane: in regions $1$, $2$, and $3$, users have perfect privacy, and in region $4$ users have no privacy.}
	\label{fig:region}
\end{figure}

For the case where the users' datasets are governed by irreducible and aperiodic Markov chains with $r$ states and $|E|$ edges, we show that users will have no privacy if $m =cn^{\frac{2}{|E|-r} +  \alpha}$ and $a_n =c'n^{-\left(\frac{1}{|E|-r}+\beta \right)}$, for any constants $c>0$, $c'>0$, $\alpha>0$, and $\beta>\frac{\alpha}{4}$. We also provide some insights for the opposite direction (under which conditions users have perfect privacy) for the case of Markov chains.

\section{Framework}
\label{sec:framework}
In this paper, we adopt a similar framework to that employed in~\cite{tifs2016,ciss2017}.  The general set up is provided here, and the refinement to the precise models for this paper will be presented in the following sections.  We assume a system with $n$ users with $X_u(k)$ denoting a sample of the data of user $u$ at time $k$, which we would like to protect from an interested adversary $\mathcal{A}$. We consider a strong adversary $\mathcal{A}$ that has complete statistical knowledge of the users' data patterns based on the previous observations or other resources. In order to secure data privacy of users, both obfuscation and anonymization techniques are used as shown in Figure \ref{fig:xyz}. In Figure \ref{fig:xyz}, $Z_u(k)$ shows the (reported) sample of the data of user $u$ at time $k$ after applying obfuscation, and $Y_u(k)$ shows the (reported) sample of the data of user $u$ at time $k$ after applying anonymization. The adversary observes only $Y_u(k)$, $k=1,2,\cdots, m(n)$, where $m(n)$ is the number of observations of each user before the identities are permuted. The adversary then tries to estimate $X_u(k)$ by using those observations.
\begin{figure}[h]
	\centering
	\includegraphics[width = 0.75\linewidth]{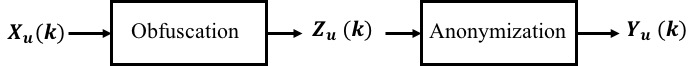}
	\caption{Applying obfuscation and anonymization techniques to users' data samples.}
	\label{fig:xyz}
\end{figure}

Let $\textbf{X}_u$ be the $m(n) \times 1$ vector containing the sample of the data of user $u$, and $\textbf{X}$ be the $m(n) \times n$ matrix with $u^{th}$ column equal to $\textbf{X}_u$;
\[\textbf{X}_u = \begin{bmatrix}
X_u(1) \\ X_u(2) \\ \vdots \\X_u(m) \end{bmatrix} , \ \ \  \textbf{X} =\left[\textbf{X}_{1}, \textbf{X}_{2}, \cdots,  \textbf{X}_{n}\right].
\]

\textit{Data Samples Model:}
We assume there are $r \geq 2$ possible values ($0,1, \cdots, r-1$) for each sample of the users' data. In the first part of the paper (perfect privacy analysis), we assume an i.i.d.\ model as motivated in Section \ref{intro}. In the second part of the paper (converse results: no privacy region), the users' datasets are governed by irreducible and aperiodic Markov chains. At any time, $X_u(k)$ is equal to a value in $\left\{0,1, \cdots, r-1 \right\}$ according to a user-specific probability distribution. The collection of user distributions, which satisfy some mild regularity conditions discussed below, is known to the adversary $\mathcal{A}$, and he/she employs such to distinguish different users based on statistical matching of those user distributions to traces of user activity of length $m(n)$.

\textit{Obfuscation Model:} The first step in obtaining privacy is to apply the obfuscation operation in order to perturb the users' data samples. In this paper, we assume that each user has only limited knowledge of the characteristics of the overall population and thus we employ a simple distributed method in which the samples of the data of each user are reported with error with a certain probability, where that probability itself is generated randomly for each user. In other words, the obfuscated data is obtained by passing the users' data through an $r$-ary symmetric channel with a random error probability. More precisely, let $\textbf{Z}_u$ be the vector which contains
the obfuscated versions of user $u$'s data samples, and $\textbf{Z}$ is the collection of $\textbf{Z}_u$ for all users,
\[\textbf{Z}_u = \begin{bmatrix}
Z_u(1) \\ Z_u(2) \\ \vdots \\Z_u(m) \end{bmatrix} , \ \ \  \textbf{Z} =\left[ \textbf{Z}_{1}, \textbf{Z}_{2}, \cdots,  \textbf{Z}_{n}\right].
\]
To create a noisy version of data samples, for each user $u$, we independently generate a random variable $R_u$ that is uniformly distributed between $0$ and $a_n$, where $a_n \in (0,1]$. The value of $R_u$ gives the probability that a user's data sample is changed to a different data sample by obfuscation, and $a_n$ is termed the ``noise level'' of the system. For the case of $r=2$ where there are two states for users' data (state $0$ and state $1$), the obfuscated data is obtained by passing users' data through a Binary Symmetric Channel (BSC) with a small error probability~\cite{randomizedresponse}. Thus, we can write
\[
{Z}_{u}(k)=\begin{cases}
{X}_{u}(k), & \textrm{with probability } 1-R_u.\\
1-{X}_{u}(k),& \textrm{with probability } R_u.
\end{cases}
\]
When $r>2$, for $l \in \{0,1,\cdots, r-1\}$:
\[
P({Z}_{u}(k)=l| X_{u}(k)=i) =\begin{cases}
1-R_u, & \textrm{for } l=i.\\
\frac{R_u}{r-1}, & \textrm{for } l \neq i.
\end{cases}
\]
Note that the effect of the obfuscation is to alter the probability distribution function of each user across the $r$ possibilities in a way that is unknown to the adversary, since it is independent of all past activity of the user, and hence the obfuscation inhibits user identification. For each user, $R_u$ is generated once and is kept constant for the collection of samples of length $m(n)$, thus, providing a very low-weight obfuscation algorithm. We will discuss the extension to the case where $R_u$ is regenerated independently over time in Section \ref{sec:perfect-MC}. There, we will also provide a discussion about obfuscation using continuous noise distributions (e.g., Gaussian noise).

\textit{Anonymization Model:} Anonymization is modeled by a random permutation $\Pi$ on the set of $n$ users. The user $u$ is assigned the pseudonym $\Pi(u)$. $\textbf{Y}$ is the anonymized version of $\textbf{Z}$; thus,
\begin{align}
\no \textbf{Y} &=\textrm{Perm}\left(\textbf{Z}_{1}, \textbf{Z}_{2}, \cdots,  \textbf{Z}_{n}; \Pi \right) \\
\nonumber &=\left[ \textbf{Z}_{\Pi^{-1}(1)}, \textbf{Z}_{\Pi^{-1}(2)}, \cdots,  \textbf{Z}_{\Pi^{-1}(n)}\right ] \\
\nonumber &=\left[ \textbf{Y}_{1}, \textbf{Y}_{2}, \cdots, \textbf{Y}_{n}\right], \ \
\end{align}
where $\textrm{Perm}( \ . \ , \Pi)$ is permutation operation with permutation function $\Pi$. As a result, $\textbf{Y}_{u} = \textbf{Z}_{\Pi^{-1}(u)}$ and $\textbf{Y}_{\Pi(u)} = \textbf{Z}_{u}$.

\textit{Adversary Model:} We protect against the strongest reasonable adversary. Through past observations or some other sources, the adversary is assumed to have complete statistical knowledge of the users' patterns; in other words, he/she knows the probability distribution for each user on the set of data samples $\{0,1,\ldots,r-1\}$. As discussed in the model for the data samples, the parameters $\textbf{p}_u$, $u=1, 2, \cdots, n$ are drawn independently from a continuous density function, $f_\textbf{P}(\textbf{p}_u)$, which has support on a subset of a defined hypercube. The density $f_\textbf{P}(\textbf{p}_u)$ might be unknown to the adversary, as all that is assumed here is that such a density exists, and it will be evident from our results that knowing or not knowing $f_\textbf{P}(\textbf{p}_u)$ does not change the results asymptotically. Specifically, from the results of Section \ref{perfectsec}, we conclude that user $u$ has perfect privacy even if the adversary knows $f_\textbf{P}(\textbf{p}_u)$. In addition, in Section \ref{converse}, it is shown that the adversary can recover the true data of user $u$ at time $k$ without using the specific density function of $f_\textbf{P}(\textbf{p}_u)$, and as result, users have no privacy even if the adversary does not know $f_\textbf{P}(\textbf{p}_u)$.

The adversary also knows the value of $a_n$ as it is a design parameter. However, the adversary does not know the realization of the random permutation $\Pi$ or the realizations of the random variables $R_u$, as these are independent of the past behavior of the users. It is critical to note that we assume the adversary does not have any auxiliary information or side information about users' data.

In \cite{tifs2016}, perfect privacy is defined as follows:
\begin{define}
User $u$ has \emph{perfect privacy} at time $k$, if and only if
\begin{align}
\no  \forall k\in \mathbb{N}, \ \ \ \lim\limits_{n\rightarrow \infty} I \left(X_u(k);{\textbf{Y}}\right) =0,
\end{align}
where $I(X;Y)$ denotes the mutual information between random variables (vectors) $X$ and $Y$.
\end{define}

\noindent In this paper, we also consider the situation in which there is no privacy. 

\begin{define}
For an algorithm for the adversary that tries to estimate the actual sample of data of user $u$ at time $k$, define
\[P_e(u,k)\triangleq P\left(\widetilde{X_u(k)} \neq X_u(k)\right),\]
where $X_u(k)$ is the actual sample of the data of user $u$ at time $k$, $\widetilde{X_u(k)}$ is the adversary's estimated sample of the data of user $u$ at time $k$, and $P_e(u,k)$ is the error probability. Now, define ${\cal E}$ as the set of all possible adversary's estimators; then, user $u$ has \emph{no privacy} at time $k$, if and only if for large enough $n$,
\[
\forall k\in \mathbb{N}, \ \ \ P^{*}_e(u,k)\triangleq \inf_{\cal E} {P\left(\widetilde{X_u(k)} \neq X_u(k)\right)} \rightarrow 0.
\]
Hence, a user has no privacy if there exists an algorithm for the adversary to estimate $X_u(k)$ with diminishing error probability as $n$ goes to infinity.
\end{define}

\textbf{\textit{Discussion:}} Both of the privacy definitions given above (perfect privacy and no privacy) are asymptotic in the number of users $(n \to \infty)$, which allows us to find clean analytical results for the fundamental limits. Moreover, in many IoT applications, such as ride sharing and dining recommendation applications, the number of users is large. 

\textbf{\textit{Notation:}} Note that the sample of data of user $u$ at time $k$ after applying obfuscation $\left(Z_u(k)\right)$ and the sample of data of user $u$ at time $k$ after applying anonymization $\left(Y_u(k)\right)$ depend on the number of users in the network $(n)$, while the actual sample of data of user $u$ at time $k$ is independent of the number of users $(n)$.  Despite the dependency in the former cases, we omit this subscript $(n)$ on $\left(Z_u^{(n)}(k), Y_u^{(n)}(k) \right)$ to avoid confusion and make the notation consistent.

\textbf{\textit{Notation:}} Throughout the paper, $X_n \xrightarrow{d} X$ denotes convergence in distribution. Also, We use $P\left(X=x\bigg{|} Y=y\right)$ for the conditional probability of $X=x$ given $Y=y$. When we write $P\left(X=x \bigg{|}Y\right)$, we are referring to a random variable that is defined as a function of $Y$.

\section{Perfect Privacy Analysis: I.I.D.\ Case}
\label{perfectsec}

\subsection{Two-States Model}

We first consider the two-states case $(r=2)$ which captures the salient aspects of the problem. For the two-states case, the sample of the data of user $u$ at any time is a Bernoulli random variable with parameter $p_u$, which is the probability of user $u$ having data sample $1$. Thus,
\[X_u(k) \sim Bernoulli \left(p_u\right).\]
Per Section \ref{sec:framework}, the parameters $p_u$, $u=1, 2, \cdots, n$ are drawn independently from a continuous density function, $f_P(p_u)$, on the $(0,1)$ interval. We assume there are $\delta_1, \delta_2>0$ such that:\footnote{The condition $\delta_1<f_P(p_u) <\delta_2$ is not actually necessary for the results and can be relaxed; however, we keep it here to avoid unnecessary technicalities.}
\begin{equation}
\no\begin{cases}
    \delta_1<f_P(p_u) <\delta_2, & p_u \in (0,1).\\
    f_P(p_u)=0, &  p _u\notin (0,1).
\end{cases}
\end{equation}

The adversary knows the values of $p_u$, $u=1, 2, \cdots, n$ and uses this knowledge to identify users. We will use capital letters (i.e., $P_u$) when we are referring to the random variable, and use lower case (i.e., $p_u$) to refer to the realization of $P_u$.

In addition, since the user data $\left(X_u(k)\right)$ are i.i.d.\ and have a Bernoulli distribution, the obfuscated data $\left(Z_u(k)\right)$ are also i.i.d.\ with  a Bernoulli distribution. Specifically,
\[Z_u(k) \sim Bernoulli\left(Q_u\right),\]
where
\begin{align}
\no {Q}_u &=P_u(1-R_u)+(1-P_u)R_u  \\
\nonumber &= P_u+\left(1-2P_u\right)R_u,\ \
\end{align}
and recall that $R_u$ is the probability that user $u$'s data sample is altered at any time. For convenience, define a vector where element $Q_u$ is the probability that an obfuscated data sample of user $u$ is equal to one, and
\[\textbf{Q} =\left[{Q}_{1},{Q}_{2}, \cdots,{Q}_{n}\right].\]
Thus, a vector containing the permutation of those probabilities after anonymization is given by:
\begin{align}
\no \textbf{V} &=\textrm{Perm}\left({Q}_{1}, {Q}_{2}, \cdots,  {Q}_{n}; \Pi \right) \\
\nonumber &=\left[ {Q}_{\Pi^{-1}(1)}, {Q}_{\Pi^{-1}(2)}, \cdots, {Q}_{\Pi^{-1}(n)}\right ] \\
\nonumber &=\left[{V}_{1}, {V}_{2}, \cdots, {V}_{n}\right ] ,\ \
\end{align}
where ${V}_{u} = {Q}_{\Pi^{-1}(u)}$ and  ${V}_{\Pi(u)} = {Q}_{u}$. As a result, for $u=1,2,..., n$, the distribution of the data symbols for the user with pseudonym $u$ is given by:
\[ Y_u(k) \sim Bernoulli \left(V_u\right) \sim Bernoulli\left(Q_{\Pi^{-1}(u)}\right)  .\]

The following theorem states that if $a_n$ is significantly larger than $\frac{1}{n}$ in this two-states model, then all users have perfect privacy independent of the value of $m(n)$.
\begin{thm}\label{two_state_thm}
 For the above two-states model, if $\textbf{Z}$ is the obfuscated version of $\textbf{X}$, and $\textbf{Y}$ is the anonymized version of $\textbf{Z}$  as defined above, and
\begin{itemize}
	\item $m=m(n)$ is arbitrary;
	\item $R_u \sim Uniform [0, a_n]$, where $a_n \triangleq c'n^{-\left(1-\beta\right)}$ for any $c'>0$ and $0<\beta<1$;
\end{itemize}
then, user 1 has perfect privacy. That is,
\begin{align}
\no  \forall k\in \mathbb{N}, \ \ \ \lim\limits_{n\rightarrow \infty} I \left(X_1(k);{\textbf{Y}}\right) =0.
\end{align}
\end{thm}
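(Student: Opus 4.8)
} Since the adversary is assumed to know the users' parameters, the meaningful quantity is $I\!\left(X_1(k);\textbf{Y}\mid P_1=p_1\right)$, and the plan is to show this tends to $0$ for every $p_1\in(0,1)$; as it is at most $\log 2$, bounded convergence in $p_1\sim f_P$ then yields the theorem (and one checks along the way that knowing $f_P$ gives the adversary no extra power). Writing $I\!\left(X_1(k);\textbf{Y}\mid P_1=p_1\right)=\mathbb{E}_{\textbf{Y}}\!\left[D\!\left(\textrm{Bern}(\widehat{p}(\textbf{Y}))\,\|\,\textrm{Bern}(p_1)\right)\right]$ with $\widehat{p}(\textbf{y})\triangleq P\!\left(X_1(k)=1\mid\textbf{Y}=\textbf{y},P_1=p_1\right)$, and using the elementary bound $D(\textrm{Bern}(a)\|\textrm{Bern}(b))\le (a-b)^2/\big(b(1-b)\big)$, it suffices to prove $\widehat{p}(\textbf{Y})\to p_1$ in $L^2$.

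The first step is to put $\widehat{p}$ in closed form by conditioning on which column of $\textbf{Y}$ carries user $1$'s data. Because $\Pi$ is uniform and $\textbf{Z}_2,\dots,\textbf{Z}_n$ are i.i.d.\ (their parameters being i.i.d.\ from $f_P$), on the event $\{\Pi^{-1}(j)=1\}$ the law of the remaining columns is $\prod_{i\ne j}g(\textbf{y}_i)$, where $g(\textbf{z})\triangleq P(\textbf{Z}_2=\textbf{z})$ is the population obfuscated-trace distribution; this is, up to the single term $g(\textbf{y}_j)$, independent of $j$. Hence the posterior weight on column $j$ is $w_j(\textbf{y})\propto g_{p_1}(\textbf{y}_j)/g(\textbf{y}_j)$ with $g_{p_1}(\textbf{z})\triangleq P(\textbf{Z}_1=\textbf{z}\mid P_1=p_1)$, and since conditionally on $\{\Pi^{-1}(j)=1\}$ one has $\textbf{Z}_1=\textbf{y}_j$ while $X_1(k)$ is independent of the other columns,
\[
\widehat{p}(\textbf{y})=\sum_{j=1}^{n}w_j(\textbf{y})\,\frac{\tilde{g}_{p_1}(\textbf{y}_j)}{g_{p_1}(\textbf{y}_j)}=\frac{\frac1n\sum_{j=1}^{n}\tilde{g}_{p_1}(\textbf{y}_j)/g(\textbf{y}_j)}{\frac1n\sum_{j=1}^{n}g_{p_1}(\textbf{y}_j)/g(\textbf{y}_j)}\triangleq\frac{N_n}{D_n},
\]
where $\tilde{g}_{p_1}(\textbf{z})\triangleq P\!\left(X_1(k)=1,\textbf{Z}_1=\textbf{z}\mid P_1=p_1\right)$, so $0\le\tilde{g}_{p_1}\le g_{p_1}$, $\sum_{\textbf{z}}g_{p_1}(\textbf{z})=1$ and $\sum_{\textbf{z}}\tilde{g}_{p_1}(\textbf{z})=p_1$.

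The second step is a law-of-large-numbers argument for $N_n$ and $D_n$. The multiset $\{\textbf{Y}_j\}$ equals $\{\textbf{Z}_v\}$; the terms with $v\ge 2$ are i.i.d.\ with law $g$, and $\mathbb{E}_{\textbf{Z}\sim g}[g_{p_1}(\textbf{Z})/g(\textbf{Z})]=1$, $\mathbb{E}_{\textbf{Z}\sim g}[\tilde{g}_{p_1}(\textbf{Z})/g(\textbf{Z})]=p_1$, so $N_n$ and $D_n$ are centered at $p_1$ and $1$. Each such summand has second moment at most $\sum_{\textbf{z}}g_{p_1}(\textbf{z})^2/g(\textbf{z})=1+\chi^{2}(g_{p_1}\,\|\,g)$, and the lone mismatched term ($v=1$, law $g_{p_1}$) is $O(\chi^{2}(g_{p_1}\|g)/n)$ in $L^1$. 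Everything therefore reduces to the estimate — and this is the main obstacle —
\[
\chi^{2}(g_{p_1}\,\|\,g)=O(1/a_n)\quad\text{uniformly in }m,
\]
which is precisely where the hypothesis $na_n=c'n^{\beta}\to\infty$ ($\beta>0$) is used: obfuscation replaces the deterministic parameter $p_1$ by $Q_1$ spread uniformly over an interval of width $\Theta(a_n)$ about $p_1$, so $g_{p_1}$ overlaps $g$ enough that the likelihood ratio $g_{p_1}/g$ is $O(1/a_n)$ on the relevant set and hence has $O(1/a_n)$ second moment under $g$ — and, crucially, additional observations cannot sharpen this beyond scale $a_n$. (Without obfuscation one instead finds $\chi^{2}=\Theta(\sqrt m)$, which is $o(n)$ exactly when $m=o(n^2)$, recovering the obfuscation-free part of the achievable region.) Making this rigorous requires a Laplace/Stirling analysis of $g_{p_1}(\textbf{z})$ and $g(\textbf{z})$ as functions of the empirical frequency $\tfrac1m\sum_k z_k$, treated separately according to the size of $ma_n^2$.

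Granting the estimate, the i.i.d.\ part contributes variance $O(\chi^{2}(g_{p_1}\|g)/n)=O(1/(na_n))=O(n^{-\beta})\to 0$ to each of $N_n,D_n$, and the $v=1$ term has mean and standard deviation $O(1/(na_n))\to 0$; hence $N_n\to p_1$ and $D_n\to 1$ in $L^2$. Since $\widehat{p}(\textbf{Y})=N_n/D_n\in[0,1]$ is bounded and $D_n\to 1$ in probability (so $P(D_n<\tfrac12)\to 0$), $\widehat{p}(\textbf{Y})\to p_1$ in probability and, by boundedness, in $L^2$. The displayed inequalities then give $I(X_1(k);\textbf{Y}\mid P_1=p_1)\to 0$ for every $p_1\in(0,1)$, and integrating over $p_1\sim f_P$ completes the argument; the only other point needing care is the ratio $N_n/D_n$ on the rare event that $D_n$ is atypically small, which is controlled since $D_n$ is bounded below by the summand of any realized trace and concentrates at $1$.
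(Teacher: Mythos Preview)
Your route is genuinely different from the paper's. The paper exploits an exact combinatorial symmetry: it builds a set $J^{(n)}$ of users whose obfuscated parameters $Q_u$ all fall in a common subinterval, then invokes the elementary fact (its Lemma~\ref{lemOnePointFive}) that if independent uniform random variables with overlapping supports all land in the intersection, every matching of observed values to variables has the same likelihood. This yields $P(\Pi(1)=j\mid\textbf{V},\Pi(J^{(n)}))=1/|J^{(n)}|$ \emph{exactly, conditional on all of $p_1,\dots,p_n$}, and the remainder is a law of large numbers over $J^{(n)}$. Your argument instead computes the posterior analytically via likelihood ratios $g_{p_1}/g$ and reduces everything to the estimate $\chi^{2}(g_{p_1}\|g)=O(1/a_n)$; this is less tied to the uniform shape of $R_u$ and would adapt more readily to other obfuscation kernels.

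The substantive gap is the adversary model your posterior formula assumes. The identity $w_j(\textbf{y})\propto g_{p_1}(\textbf{y}_j)/g(\textbf{y}_j)$ requires that $\textbf{Z}_2,\dots,\textbf{Z}_n$ be i.i.d.\ from the population law $g$, i.e.\ that $P_2,\dots,P_n$ be \emph{integrated out}. In the paper's model the adversary knows every $p_u$; conditional on $(p_1,\dots,p_n)$ the columns are not exchangeable, and $P(\Pi(1)=j\mid\textbf{Y})$ is a permanent over the remaining assignments, which does not collapse to your ratio $N_n/D_n$. What you actually bound is therefore $I(X_1(k);\textbf{Y}\mid P_1)$, and the sentence ``bounded convergence in $p_1$ then yields the theorem'' is not correct: integrating in $p_1$ gives $I(X_1(k);\textbf{Y}\mid P_1)\to 0$, which neither equals nor upper-bounds $I(X_1(k);\textbf{Y})$ --- the latter could still be as large as $I(X_1(k);P_1)>0$ if $\textbf{Y}$ happened to reveal $P_1$. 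The paper's symmetry lemma is precisely the device that tames the permanent under full knowledge of the $p_u$'s; to repair your approach you would need either a separate argument that $\textbf{Y}$ leaks asymptotically nothing about $P_1$, or to confront the permanent directly. The $\chi^{2}=O(1/a_n)$ heuristic itself (obfuscation caps the effective resolution at width $a_n$, so $g_{p_1}/g\lesssim 1/(a_n f_P(p_1))$ on the typical set regardless of $m$) is plausible and nicely isolates where $na_n\to\infty$ enters, but as you acknowledge it remains a sketch.
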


The proof of Theorem \ref{two_state_thm} will be provided for the case $0\leq p_1<\frac{1}{2}$, as the proof for the case $\frac{1}{2}\leq p_1\leq1$ is analogous and is thus omitted.
\\

\noindent \textbf{Intuition behind the Proof of Theorem \ref{two_state_thm}:} 

Since $m(n)$ is arbitrary, the adversary is able to estimate very accurately (in the limit, perfectly) the distribution from which each data sequence $\textbf{Y}_u$, $u= 1, 2, \cdots, n$ is drawn; that is, the adversary is able to accurately estimate the probability $V_u$, $u= 1, 2, \cdots, n$. Clearly, if there were no obfuscation for each user $u$, the adversary would then simply look for the $j$ such that $p_j$ is very close to $V_u$ and set $\widetilde{X_j(k)}=Y_u(k)$, resulting in no privacy for any user.

We want to make certain that the adversary obtains no information about $X_1(k)$, the sample of data of user $1$ at time $k$. To do such, we will establish that there are a large number of users whom have a probability $p_u$ that when obfuscated could have resulted in a probability consistent with $p_1$. Consider asking whether another probability $p_2$ is sufficiently close enough to be confused with $p_1$ after obfuscation; in particular, we will look for $p_2$ such that, even if the adversary is given the obfuscated probabilities $V_{\Pi(1)}$ and $V_{\Pi(2)}$, he/she cannot associate these probabilities with $p_1$ and $p_2$. This requires that the distributions $Q_{1}$ and $Q_{2}$ of the obfuscated data of user $1$ and user $2$ have significant overlap; we explore this next.

Recall that $Q_u=P_u+ (1-2P_u)R_u$, and $R_u\sim Uniform [0, a_n]$. Thus, we know $Q_u {|} P_u=p_u$ has a uniform distribution with length $(1-2p_u)a_n$. Specifically,
\[Q_u\bigg{|}P_u=p_u \sim Uniform \left[p_u,p_u+(1-2p_u)a_n\right].\]
Figure \ref{fig:piqi_a} shows the distribution of $Q_u$ given $P_u=p_u$.

\begin{figure}[h]
	\centering
	\includegraphics[width = 0.8\linewidth]{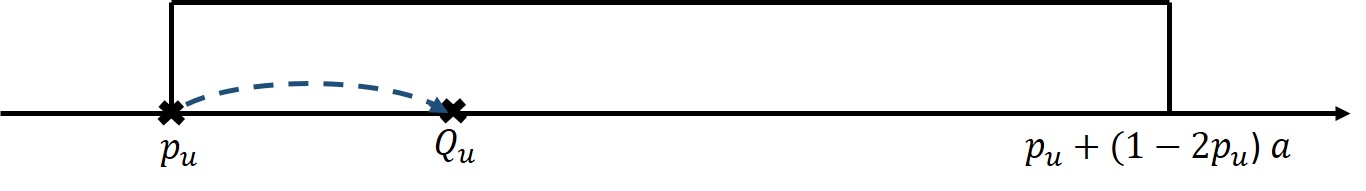}
	\caption{Distribution of $Q_u$ given $P_u=p_u$.}
	\label{fig:piqi_a}
\end{figure}

Consider two cases: In the first case, the support of the distributions $Q_1\big{|} P_1=p_1$ and $Q_2\big{|} P_2=p_2$ are small relative to the difference between $p_1$ and $p_2$ (Figure \ref{fig:case1}); in this case, given the probabilities $V_{\Pi(1)}$ and $V_{\Pi(2)}$ of the anonymized data sequences, the adversary can associate those with $p_1$ and $p_2$ without error. In the second case, the support of the distributions $Q_1\big{|} P_1=p_1$ and $Q_2\big{|} P_2=p_2$ is large relative to the difference between $p_1$ and $p_2$ (Figure \ref{fig:case2}), so it is difficult for the adversary to associate the probabilities $V_{\Pi(1)}$ and $V_{\Pi(2)}$ of the anonymized data sequences with $p_1$ and $p_2$. In particular, if $V_{\Pi(1)}$ and $V_{\Pi(2)}$ fall into the overlap of the support of $Q_1$ and $Q_2$, we will show the adversary can only guess randomly how to de-anonymize the data. Thus, if the ratio of the support of the distributions to $\big{|}p_1-p_2\big{|}$ goes to infinity, the adversary's posterior probability for each user converges to $\frac{1}{2}$, thus, implying no information leakage on the user identities. More generally, if we can guarantee that there will be a large set of users with $p_u$'s very close to $p_1$ compared to the support of $Q_1\big{|} P_1=p_1$, we will be able to obtain perfect privacy as demonstrated rigorously below.

\begin{figure}[h]
	\centering
	\includegraphics[width = 0.8\linewidth]{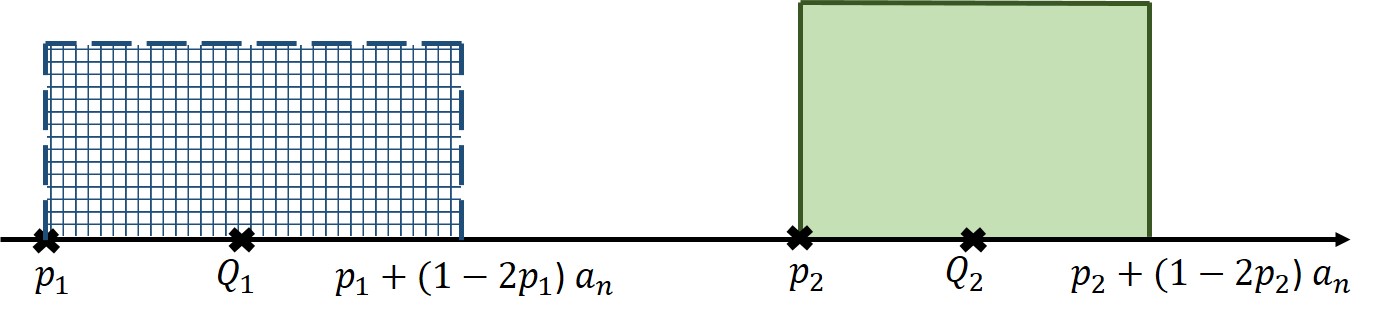}
	\caption{Case 1: The support of the distributions is small relative to the difference between $p_1$ and $p_2$.}
	\label{fig:case1}
\end{figure}

\begin{figure}[h]
	\centering
	\includegraphics[width = 0.8\linewidth]{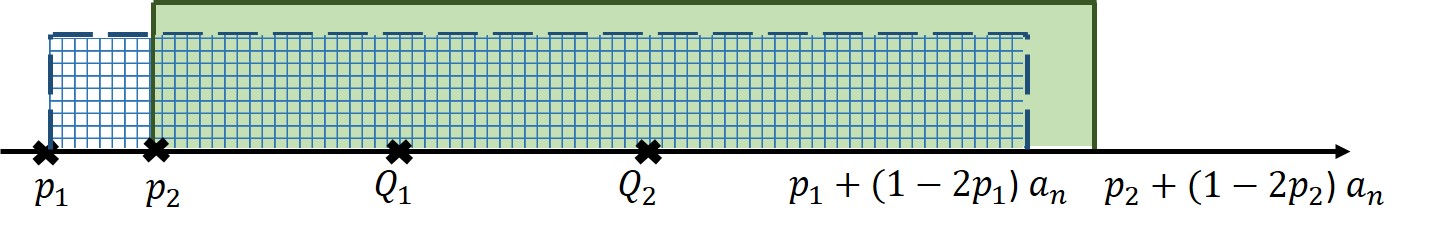}
	\caption{Case 2: The support of the distributions is large relative to the difference between $p_1$ and $p_2$.}
	\label{fig:case2}
\end{figure}

Given this intuition, the formal proof proceeds as follows. Given $p_1$, we define a set $J^{(n)}$ of users whose parameter $p_u$ of their data distributions is sufficiently close to $p_1$ (Figure \ref{fig:case2}; case 2), so that it is likely that $Q_1$ and $Q_u$ cannot be readily associated with $p_1$ and $p_u$.

The purpose of Lemmas \ref{lemOnePointFive}, \ref{lem2}, and \ref{lem3} is to show that, from the adversary's perspective, the users in set $J^{(n)}$ are indistinguishable. More specifically, the goal is to show that the obfuscated data corresponding to each of these users could have been generated by any other users in $J^{(n)}$ in an equally likely manner. To show this, Lemma \ref{lemOnePointFive} employs the fact that, if the observed values of $N$ uniformly distributed random variables ($N$ is size of set $J^{(n)}$) are within the intersection of their ranges, it is impossible to infer any information about the matching between the observed values and the distributions. That is, all possible $N!$ matchings are equally likely. Lemmas \ref{lem2} and \ref{lem3} leverage Lemma \ref{lemOnePointFive} to show that even if the adversary is given a set that includes all of the pseudonyms of the users in set $J^{(n)}$  (i.e., $\Pi(J^{(n)})\overset{\Delta}{=} \left\{\Pi^{-1}(u) \in J^{(n)}\right\}$) he/she still will not be able to infer any information about the matching of each specific user in set $J^{(n)}$ and his pseudonym. Then Lemma \ref{lem4} uses the above fact to show that the mutual information between the data set of user $1$ at time $k$ and the observed data sets of the adversary converges to zero for large enough $n$.


\noindent \textbf{Proof of Theorem \ref{two_state_thm}:}

\begin{proof}
Note, per Lemma~\ref{lemx} of Appendix \ref{sec:app_a}, it is sufficient to establish the results on a sequence of sets with high probability. That is, we can condition on high-probability events.

Now, define the critical set $J^{(n)}$ with size $N^{(n)}=\big{|}J^{(n)}\big{|}$ for $0\leq p_1<\frac{1}{2}$ as follows:
\[J^{(n)}=
\left\{  u \in \{1, 2, \dots, n\}: p_1 \leq P_u\leq p_1+\epsilon_n;  p_1+\epsilon_n\leq Q_u\leq p_1+(1-2p_1)a_n\right\},
\]
where $\epsilon_n \triangleq \frac{1}{n^{1-\frac{\beta}{2}}}$, $a_n= c'n^{-\left(1-\beta\right)}$ ,and $\beta$ is defined in the statement of Theorem \ref{two_state_thm}.

Note for large enough $n$, if $0\leq p_1<\frac{1}{2}$, we have $0\leq p_u<\frac{1}{2}$. As a result,
\[Q_u\bigg{|}P_u=p_u \sim Uniform \left(p_u,p_u+(1-2p_u)a_n\right).\]
We can prove that with high probability, $1 \in J^{(n)}$ for large enough $n$, as follows. First, Note that
\[Q_1\bigg{|}P_1=p_1 \sim Uniform \left(p_1,p_1+(1-2p_1)a_n\right).\]
Now, according to Figure \ref{fig:piqi_c},
	\begin{align}
	\no P\left(1 \in J^{(n)} \right) &= 1- \frac{\epsilon_n}{ \left(1-2p_1 \right)a_n}\\
	\nonumber &= 1- \frac{1}{ \left(1-2p_1 \right)c'n^{\frac{\beta}{2}}}, \ \
	\end{align}
thus, for any $c'>0$ and large enough $n$,
\begin{align}
	\no P\left(1 \in J^{(n)} \right) \to 1.
\end{align}
\begin{figure}[h]
	\centering
	\includegraphics[width = 0.8\linewidth]{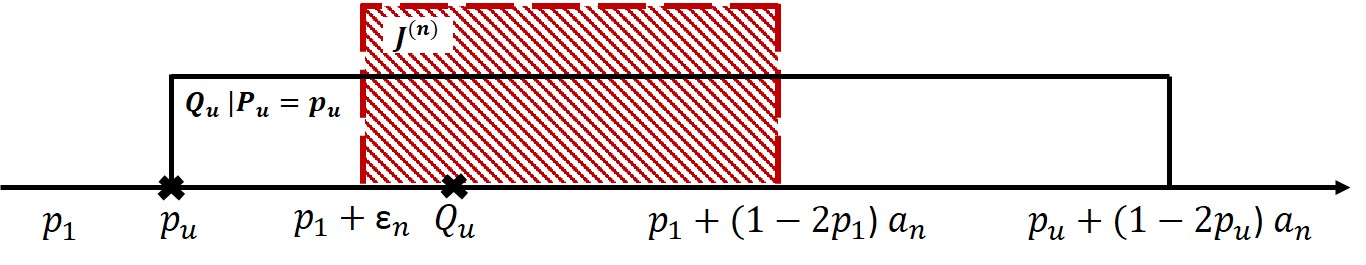}
	\caption{Range of $P_u$ and $Q_u$ for elements of set $J^{(n)}$ and probability density function of $Q_u\bigg{|}P_u=p_u$.}
	\label{fig:piqi_b}
\end{figure}

\begin{figure}[h]
	\centering
	\includegraphics[width = 0.8\linewidth]{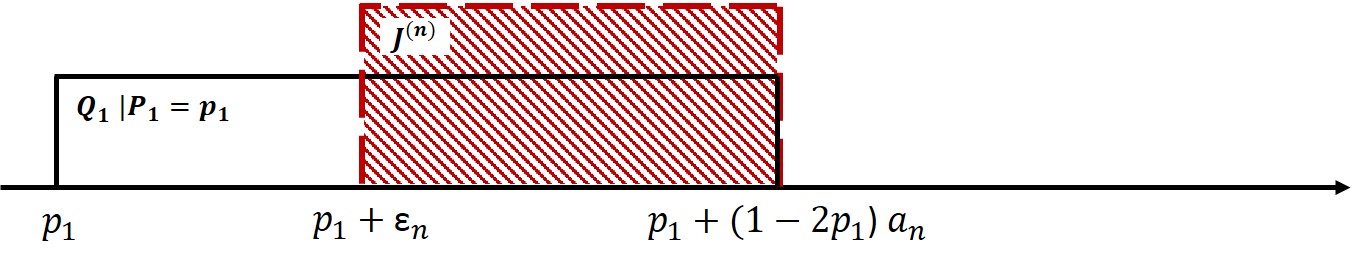}
	\caption{Range of $P_u$ and $Q_u$ for elements of set $J^{(n)}$ and probability density function of $Q_1\bigg{|}P_1=p_1$.}
	\label{fig:piqi_c}
\end{figure}

Now in the second step, we define the probability $W_j^{(n)}$ for any $j \in \Pi(J^{(n)})=\{\Pi(u): u \in J^{(n)} \}$ as
\[W_j^{(n)}= P\left(\Pi(1)=j \bigg{|} \textbf{V}, \Pi (J^{(n)})\right).\]
$W_j^{(n)}$ is the conditional probability that $\Pi(1)=j$ after perfectly observing the values of the permuted version of obfuscated probabilities ($\textbf{V}$) and set including all of the pseudonyms of the users in set $J^{(n)}$ $\left(\Pi(J^{(n)})\right)$. Since $\textbf{V}$ and $\Pi (J^{(n)})$ are random, $W_j^{(n)}$ is a random variable. However, we will prove shortly that in fact $W_j^{(n)}=\frac{1}{N^{(n)}}$, for all $j \in \Pi (J^{(n)})$.

Note: Since we are looking from the adversary's point of view, the assumption is that all the values of $P_u$, $u \in \{1,2,\cdots,n\}$ are known, so all of the probabilities are conditioned on the values of $P_1=p_1, P_2=p_2, \cdots, P_n=p_n$. Thus, to be accurate, we should write
\[W_j^{(n)}= P\left(\Pi(1)=j \bigg{|} \textbf{V}, \Pi (J^{(n)}), P_1, P_2, \cdots, P_n\right).\]
Nevertheless, for simplicity of notation, we often omit the conditioning on $P_1, P_2, \cdots, P_n$.

First, we need a lemma from elementary probability.

\begin{lem}
	\label{lemOnePointFive}
	Let $N$ be a positive integer, and let $a_1, a_2, \cdots, a_N$ and $b_1, b_2, \cdots, b_N$ be real numbers such that $a_u \leq b_u$ for all $u$. Assume that $X_1, X_2, \cdots, X_N$ are independent random variables such that
\[X_u \sim Uniform [a_u,b_u]. \]
Let also $\gamma_1, \gamma_2, \cdots, \gamma_N$ be distinct real numbers such that
\[ \gamma_j \in \bigcap_{u=1}^{N} [a_u, b_u] \ \ \textrm{for all }j \in \{1,2,..,N\}.\]
Suppose that we know the event $E$ has occurred, meaning that the observed values of $X_u$'s are equal to the set of $\gamma_j$'s (but with unknown ordering), i.e.,
\[E \ \ \equiv \ \ \{X_1, X_2, \cdots, X_N\}= \{ \gamma_1, \gamma_2, \cdots, \gamma_N \},\] then
\[P\left(X_1=\gamma_j |E\right)=\frac{1}{N}. \]
\end{lem}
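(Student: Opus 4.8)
The plan is to prove the lemma by a symmetry argument on the joint density of $(X_1,\dots,X_N)$, the point being that the hypothesis $\gamma_j\in\bigcap_{u=1}^N[a_u,b_u]$ forces the density to take the \emph{same} value at all $N!$ orderings of $(\gamma_1,\dots,\gamma_N)$; the only real work is to make sense of conditioning on the probability-zero event $E$. First I would write the joint pdf $f_{\mathbf X}(x_1,\dots,x_N)=\prod_{u=1}^N \frac{1}{b_u-a_u}\,\mathbf 1\{a_u\le x_u\le b_u\}$, and observe that on $E$ the vector $(X_1,\dots,X_N)$ must equal one of the $N!$ distinct points $\mathbf x^\sigma:=(\gamma_{\sigma(1)},\dots,\gamma_{\sigma(N)})$, $\sigma\in S_N$ (distinct because the $\gamma_j$ are distinct). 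Since every $\gamma_j$ lies in every $[a_u,b_u]$, each indicator $\mathbf 1\{a_u\le\gamma_{\sigma(u)}\le b_u\}$ equals $1$, so $f_{\mathbf X}(\mathbf x^\sigma)=\prod_u\frac1{b_u-a_u}$, a positive constant independent of $\sigma$. Heuristically this already says all $N!$ orderings are equally likely given $E$.

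To make this rigorous I would replace $E$ by a shrinking family of positive-probability events. Fix a small $\epsilon>0$ and let $E_\epsilon$ be the event that there is a bijection $\pi:\{1,\dots,N\}\to\{1,\dots,N\}$ with $|X_u-\gamma_{\pi(u)}|<\epsilon$ for every $u$. For $\epsilon$ small enough the intervals $(\gamma_j-\epsilon,\gamma_j+\epsilon)$ are pairwise disjoint (finitely many distinct points) and each is contained in every $[a_u,b_u]$ — here I would assume the $\gamma_j$ are interior to the intervals, the boundary case being a null event that is irrelevant in the application, where the $\gamma_j$ are realizations of an absolutely continuous random vector. Then the $S_N$-indexed events partitioning $E_\epsilon$ are disjoint and $P\big(X_u\in(\gamma_{\sigma(u)}-\epsilon,\gamma_{\sigma(u)}+\epsilon)\ \forall u\big)=\prod_u\frac{2\epsilon}{b_u-a_u}$, so $P(E_\epsilon)=N!\,\prod_u\frac{2\epsilon}{b_u-a_u}$, while restricting the sum to $\sigma$ with $\sigma(1)=j$ gives $P\big(\{X_1\in(\gamma_j-\epsilon,\gamma_j+\epsilon)\}\cap E_\epsilon\big)=(N-1)!\,\prod_u\frac{2\epsilon}{b_u-a_u}$. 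Dividing, $P\big(X_1\in(\gamma_j-\epsilon,\gamma_j+\epsilon)\mid E_\epsilon\big)=\frac{(N-1)!}{N!}=\frac1N$ for every sufficiently small $\epsilon$.

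Finally I would let $\epsilon\downarrow0$: since $E_\epsilon\downarrow E$ up to the null event that some $X_u$ coincide or fall outside the intervals, we interpret $P(X_1=\gamma_j\mid E):=\lim_{\epsilon\to0}P\big(X_1\in(\gamma_j-\epsilon,\gamma_j+\epsilon)\mid E_\epsilon\big)=\frac1N$, which is exactly the claim; equivalently one can phrase it as: the regular conditional distribution of $(X_1,\dots,X_N)$ given the unordered multiset $\{X_1,\dots,X_N\}$ is uniform over the $N!$ permutations. I expect the main obstacle to be purely the careful handling of this measure-zero conditioning — choosing the family $E_\epsilon$ and checking the disjointness and containment so that the ratio equals $1/N$ for \emph{every} small $\epsilon$ (not just in the limit); once that scaffolding is in place the conclusion is elementary permutation counting, namely that $(N-1)!$ of the $N!$ equally likely orderings place $\gamma_j$ in the first coordinate.
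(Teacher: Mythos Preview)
Your proposal is correct and follows essentially the same approach as the paper: both arguments reduce to the observation that the joint density $f_{\mathbf X}$ equals the constant $\prod_u (b_u-a_u)^{-1}$ at every one of the $N!$ permuted points $(\gamma_{\sigma(1)},\dots,\gamma_{\sigma(N)})$, and then count that $(N-1)!$ of these have $\gamma_j$ in the first coordinate. The paper simply writes the conditional probability directly as the ratio of density sums $\sum_{\pi\in\mathfrak P_j} f_{\mathbf X}(\gamma_{\pi(1)},\dots)/\sum_{\pi\in\mathfrak P} f_{\mathbf X}(\gamma_{\pi(1)},\dots)$ without commenting on the measure-zero issue, whereas you supply the $E_\epsilon$ scaffolding to justify that step; this extra care is welcome but does not change the argument's substance.
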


\begin{proof}
Lemma \ref{lemOnePointFive} is proved in Appendix \ref{sec:app_b}.
\end{proof}



Using the above lemma, we can state our desired result for $W_j^{(n)}$.

\begin{lem}
	\label{lem2}
	For all $j \in \Pi (J^{(n)})$, $W_j^{(n)}=\frac{1}{N^{(n)}}.$
\end{lem}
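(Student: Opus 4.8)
The plan is to condition throughout on $P_1=p_1,\dots,P_n=p_n$ (as the paper's notational convention dictates) and to reduce the computation of $W_j^{(n)}$ to a direct application of Lemma~\ref{lemOnePointFive}. Write $N=N^{(n)}$ and enumerate $J^{(n)}=\{u_1,\dots,u_N\}$ with $u_1=1$ (recall $1\in J^{(n)}$ with high probability, so we may condition on this event by Lemma~\ref{lemx}). Conditioned on the $p_u$'s, the random variables $\{Q_u:u\in J^{(n)}\}$ are independent with $Q_u\sim\mathrm{Uniform}[p_u,\,p_u+(1-2p_u)a_n]$, and they are independent of the permutation $\Pi$.

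First I would check that every observed value lies in the common intersection of the supports. By definition of $J^{(n)}$, for $u\in J^{(n)}$ we have $p_1\le p_u\le p_1+\epsilon_n$ and $p_1+\epsilon_n\le Q_u\le p_1+(1-2p_1)a_n$. For any $u'\in J^{(n)}$ this gives $Q_u\ge p_1+\epsilon_n\ge p_{u'}$, and, using $a_n\le\tfrac12$ for large $n$,
\[
p_{u'}+(1-2p_{u'})a_n-\big(p_1+(1-2p_1)a_n\big)=(p_{u'}-p_1)(1-2a_n)\ge 0,
\]
so $Q_u\le p_1+(1-2p_1)a_n\le p_{u'}+(1-2p_{u'})a_n$. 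Hence $Q_u\in\bigcap_{u'\in J^{(n)}}[p_{u'},\,p_{u'}+(1-2p_{u'})a_n]$ for every $u\in J^{(n)}$; in particular the observed values $\gamma_j:=V_j$, $j\in\Pi(J^{(n)})$, which form precisely the (a.s.\ distinct) multiset $\{Q_u:u\in J^{(n)}\}$, all lie in this common intersection.

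Next I would argue that the extra information carried by $(\textbf{V},\Pi(J^{(n)}))$ beyond ``the unordered set of values $\{V_j:j\in\Pi(J^{(n)})\}$'' is irrelevant. Given $\Pi(J^{(n)})=S$, the permutation splits into an independent uniform bijection $J^{(n)}\to S$ and an independent uniform bijection $(J^{(n)})^c\to S^c$; the coordinates $(V_j)_{j\notin S}$ are functions of $\{Q_u:u\notin J^{(n)}\}$ and the second bijection only, hence (given $\Pi(J^{(n)})=S$) independent of the pair formed by $(V_j)_{j\in S}$ and the event $\{\Pi(1)=j\}$, so they may be dropped. What remains is the joint law of $(V_j)_{j\in S}=(Q_{\Pi^{-1}(j)})_{j\in S}$ together with the user--pseudonym matching; since all $Q_u$, $u\in J^{(n)}$, have the same constant density value $\prod_{u\in J^{(n)}}\big[(1-2p_u)a_n\big]^{-1}$ on the common intersection, every one of the $N!$ matchings between $J^{(n)}$ and $S$ is equally likely a posteriori. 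This is exactly Lemma~\ref{lemOnePointFive} (with $X_u=Q_u$ and $[a_u,b_u]=[p_u,p_u+(1-2p_u)a_n]$), and it yields $W_j^{(n)}=P\big(\Pi(1)=j\,\big|\,\textbf{V},\Pi(J^{(n)})\big)=(N-1)!/N!=1/N^{(n)}$ for each $j\in\Pi(J^{(n)})$.

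The main obstacle is this last reduction: making precise that conditioning on the full ordered vector $\textbf{V}$ and on $\Pi(J^{(n)})$ gives nothing more than conditioning on the unordered event $E$ of Lemma~\ref{lemOnePointFive}. The clean way around it is the likelihood computation sketched above --- because every observed value falls in $\bigcap_{u\in J^{(n)}}[a_u,b_u]$, the likelihood of each user-to-pseudonym matching is the same constant, so the posterior over matchings is uniform whether or not positions are recorded; the independence of $\Pi$ from the $Q_u$'s and the product-form split of $\Pi$ conditioned on $\Pi(J^{(n)})$ are precisely what license discarding the pseudonyms outside $\Pi(J^{(n)})$ and the values on them.
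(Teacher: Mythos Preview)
Your proposal is correct and follows essentially the same approach as the paper: verify that the observed values $\{Q_u:u\in J^{(n)}\}$ all lie in the common intersection $\bigcap_{u\in J^{(n)}}[p_u,\,p_u+(1-2p_u)a_n]$, then invoke Lemma~\ref{lemOnePointFive}. Your treatment is in fact more careful than the paper's on one point: the paper simply asserts that it suffices to look at $V_j$ for $j\in\Pi(J^{(n)})$ and replaces $\textbf{V}$ by $\textbf{V}_\Pi$, whereas you explicitly justify this via the product-form decomposition of $\Pi$ conditioned on $\Pi(J^{(n)})$ and the resulting independence of $(V_j)_{j\notin\Pi(J^{(n)})}$ from the matching inside $J^{(n)}$.
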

\begin{proof}
	

We argue that the setting of this lemma is essentially equivalent to the assumptions in Lemma \ref{lemOnePointFive}. First, remember that
\[W_j^{(n)}= P\left(\Pi(1)=j \bigg{|} \textbf{V}, \Pi (J^{(n)})\right).\]

Note that ${Q}_u= P_u+(1-2P_u)R_u$, and since $R_u$ is uniformly distributed, ${Q}_u$ conditioned on $P_u$ is also uniformly distributed in the appropriate intervals. Moreover, since ${V}_{u} = {Q}_{\Pi^{-1}(u)}$, we conclude ${V}_{u}$ is also uniformly distributed. So, looking at the definition of $W_j^{(n)}$, we can say the following: given the values of the uniformly distributed random variables ${Q}_u$, we would like to know which one of the values in  $\textbf{V}$ is the actual value of ${Q}_1={V}_{\Pi(1)}$, i.e., is $\Pi(1)=j$? This is equivalent to the setting of Lemma \ref{lemOnePointFive} as described further below. 

Note that since $1 \in J^{(n)}$, $\Pi(1) \in \Pi (J^{(n)})$. Therefore, when searching for the value of $\Pi(1)$, it is sufficient to look inside set $\Pi (J^{(n)})$. Therefore, instead of looking among all the values of ${V}_{j}$, it is sufficient to look at ${V}_{j}$ for $j \in  \Pi (J^{(n)})$. Let's show these values by $\textbf{V}_{\Pi} =\{v_1, v_2, \cdots, v_{N^{(n)}} \}$, so,
\[W_j^{(n)}= P\left(\Pi(1)=j \bigg{|} \textbf{V}_{\Pi}, \Pi (J^{(n)})\right).\]

Thus, we have the following scenario: $Q_u, u \in  J^{(n)}$ are independent random variables, and
\[Q_u\big{|} P_u=p_u \sim Uniform [p_u, p_u+(1-2p_u)a_n]. \]
Also, $v_1, v_2, \cdots, v_{N^{(n)}}$ are the observed values of $Q_u$ with unknown ordering (unknown mapping $\Pi$). We also know from the definition of set $J^{(n)}$ that
\[P_u \leq p_1+\epsilon_n \leq Q_u,\]
\[Q_u \leq p_1(1-2a_n)+an \leq P_u(1-2a_n)+a_n,\]
so, we can conclude
\[ v_j \in \bigcap_{u=1}^{N^{(n)}} [p_u, p_u+(1-2p_u)a_n] \ \ \textrm{for all }j \in \{1,2,..,N^{(n)}\}. \]
We know the event $E$ has occurred, meaning that the observed values of $Q_u$'s are equal to set of $v_j$'s (but with unknown ordering), i.e.,
\[E \ \ \equiv \ \ \{Q_u, u \in  J^{(n)}\}= \{ v_1, v_2, \cdots, v_{N^{(n)}} \}. \]
Then, according to Lemma \ref{lemOnePointFive},
\[P\left(Q_1=v_j |E, P_1, P_2, \cdots, P_n \right)=\frac{1}{N^{(n)}}. \]
Note that there is a subtle difference between this lemma and Lemma \ref{lemOnePointFive}. Here $N^{(n)}$ is a random variable while $N$ is a fixed number in Lemma \ref{lemOnePointFive}. Nevertheless, since the assertion holds for every fixed $N$, it also holds for the case where $N$ is a random variable. Now, note that
\begin{align*}
P\left(Q_1=v_j |E, P_1, P_2, \cdots, P_n \right) &= P\left(\Pi(1)=j \bigg{|} E, P_1, P_2, \cdots, P_n \right)\\
&=P\left(\Pi(1)=j \bigg{|} \textbf{V}_{\Pi}, \Pi (J^{(n)}), P_1, P_2, \cdots, P_n \right)\\
&=W_j^{(n)}.
\end{align*}
Thus, we can conclude
		\[W_j^{(n)}=\frac{1}{N^{(n)}}.\]
\end{proof}	
%
%
In the third step, we define $\widetilde{W_j^{(n)}}$ for any $j \in \Pi (J^{(n)})$ as
\[\widetilde{W_j^{(n)}} = P\left(\Pi(1)=j \bigg{|} \textbf{Y}, \Pi (J^{(n)})\right).\]

$\widetilde{W_j^{(n)}}$ is the conditional probability that $\Pi(1)=j$ after observing the values of the anonymized version of the obfuscated samples of the users' data ($\textbf{Y}$) and the aggregate set including all the pseudonyms of the users in set $J^{(n)}$ (i.e., $\Pi(J^{(n)})\overset{\Delta}{=} \left\{\Pi^{-1}(j) \in J^{(n)}\right\}$). Since $\textbf{Y}$ and $\Pi (J^{(n)})$ are random, $\widetilde{W_j^{(n)}} $ is a random variable. Now, in the following lemma, we will prove $\widetilde{W_j^{(n)}} =\frac{1}{N^{(n)}}$, for all $j \in \Pi (J^{(n)})$ by using Lemma \ref{lem3}.

Note in the following lemma, we want to show that even if the adversary is given a set including all of the pseudonyms of the users in set $J^{(n)}$, he/she cannot match each specific user in set $J^{(n)}$ and his pseudonym.

\begin{lem}
	\label{lem3}
	For all $j \in \Pi (J^{(n)})$, $\widetilde{W_j^{(n)}}=\frac{1}{N^{(n)}}.$
\end{lem}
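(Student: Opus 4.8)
The plan is to reduce Lemma~\ref{lem3} to Lemma~\ref{lem2}, by arguing that the extra randomness in $\textbf{Y}$ beyond what is already captured by $\textbf{V}$ carries no information about the permutation $\Pi$ restricted to $J^{(n)}$. Recall that conditioned on $\textbf{V} = \textbf{v}$, the anonymized data columns $\textbf{Y}_1, \dots, \textbf{Y}_n$ are independent, with $\textbf{Y}_j$ an i.i.d.\ $Bernoulli(v_j)$ sequence of length $m(n)$. In other words, $\textbf{Y} \to \textbf{V} \to \Pi$ is not quite the right chain; rather the point is that the \emph{conditional law of $\textbf{Y}$ given $(\textbf{V}, \Pi, P_1,\dots,P_n)$ depends only on $\textbf{V}$}. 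So $\textbf{Y}$ and $\Pi(1)$ are conditionally independent given $(\textbf{V}, \Pi(J^{(n)}), P_1,\dots,P_n)$.

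First I would write out $\widetilde{W_j^{(n)}}$ by conditioning on $\textbf{V}$:
\[
\widetilde{W_j^{(n)}} = P\left(\Pi(1)=j \,\Big|\, \textbf{Y}, \Pi(J^{(n)})\right) = \mathbb{E}\left[ P\left(\Pi(1)=j \,\Big|\, \textbf{Y}, \textbf{V}, \Pi(J^{(n)})\right) \,\Big|\, \textbf{Y}, \Pi(J^{(n)}) \right].
\]
Then I would argue that the inner conditional probability equals $P(\Pi(1)=j \mid \textbf{V}, \Pi(J^{(n)})) = W_j^{(n)}$, using the conditional independence of $\textbf{Y}$ and $\Pi(1)$ given $(\textbf{V}, \Pi(J^{(n)}))$ noted above (this is where one must be careful: $\textbf{Y}$ is a deterministic-plus-noise function of $\textbf{Z}$ and $\Pi$, but its law given $\textbf{V}$ and $\Pi$ does not further depend on which coordinate went where beyond what $\textbf{V}$ records, since $\textbf{V}$ already is the $\Pi$-permuted vector of the $Q_u$'s). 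Since by Lemma~\ref{lem2} we have $W_j^{(n)} = \frac{1}{N^{(n)}}$ for every $j \in \Pi(J^{(n)})$ (a constant, not just in expectation), the outer expectation is trivially $\frac{1}{N^{(n)}}$ as well, giving $\widetilde{W_j^{(n)}} = \frac{1}{N^{(n)}}$.

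I expect the main obstacle to be making the conditional-independence claim airtight, in particular handling the conditioning on the random set $\Pi(J^{(n)})$ and on $P_1,\dots,P_n$ simultaneously with $\textbf{V}$, and making sure that revealing $\textbf{Y}$ (the full length-$m(n)$ sequences) rather than just $\textbf{V}$ does not help the adversary distinguish users \emph{within} $J^{(n)}$. The resolution is that, given $\textbf{V}$, the sequences $\{\textbf{Y}_j : j \in \Pi(J^{(n)})\}$ are exchangeable in a way that is symmetric under relabelling consistent with the values $\textbf{V}$, so conditioning further on $\textbf{Y}$ only refines knowledge of $\textbf{V}$ (which is already conditioned on) and nothing else. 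A clean way to phrase this is: the Markov chain $\Pi(1) \,-\, (\textbf{V}, \Pi(J^{(n)}), P_{1:n}) \,-\, \textbf{Y}$ holds, because $\textbf{Y}$ is generated from $\textbf{V}$ (via independent Bernoulli sampling in each column) together with independent noise, with no further dependence on the identity permutation. Once that is established, the tower property of conditional expectation finishes the argument immediately, and I would then invoke Lemma~\ref{lem3} in the subsequent Lemma~\ref{lem4} to control $I(X_1(k); \textbf{Y})$.
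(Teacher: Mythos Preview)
Your proposal is correct and essentially the same as the paper's proof: the paper conditions on $\textbf{V}$ by writing $\widetilde{W_j^{(n)}}$ as a sum over values $\textbf{v}$ (your tower-property formulation is the same thing), then invokes the conditional independence of $\textbf{Y}$ and $\Pi(1)$ given $(\textbf{V},\Pi(J^{(n)}))$ to reduce the inner term to $W_j^{(n)}=\frac{1}{N^{(n)}}$, and pulls the constant out. Your Markov-chain phrasing $\Pi(1)\,-\,(\textbf{V},\Pi(J^{(n)}),P_{1:n})\,-\,\textbf{Y}$ is exactly the conditional-independence claim the paper uses (and justifies by noting $P(Y_u(k)\mid V_u=v_u,\Pi(J^{(n)}))=v_u$).
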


\begin{proof}
First, note that


\begin{align*}
\widetilde{W_j^{(n)}} = \sum_{\text{for all v}} P\left(\Pi(1)=j \bigg{|} \textbf{Y}, \Pi \left(J^{(n)}\right), \textbf{V}=\textbf{v}\right) P\left(\textbf{V}=\textbf{v} \bigg{|} \textbf{Y}, \Pi \left(J^{(n)}\right)\right).
\end{align*}

Also, we note that given $\textbf{V}$, $\Pi(J^{(n)})$, and $\textbf{Y}$ are independent. Intuitively, this is because when observing $\textbf{Y}$, any information regarding $\Pi(J^{(n)})$ is leaked through estimating $\textbf{V}$. This can be rigorously proved similar to the proof of Lemma 1 in \cite{tifs2016}. We can state this fact as
\[
	P\left(Y_u(k)\  \bigg{ | }  \ {V}_u=v_u, \Pi(J^{(n)}) \right)  = P\left(Y_u(k)\  \bigg{ | } \ {V}_u=v_u\right)=v_u.
\]
The right and left hand side are given by $Bernoulli (v_u)$ distributions.

As a result,
\[
\widetilde{W_j^{(n)}} = \sum_{\text{for all \textbf{v}}} P\left(\Pi(1)=j \bigg{|} \Pi (J^{(n)}), \textbf{V}=\textbf{v}\right) P\left(\textbf{V}=\textbf{v} \bigg{|} \textbf{Y}, \Pi \left(J^{(n)}\right)\right).
\]
Note $W_j^{(n)}=P\left(\Pi(1)=j \bigg{|} \Pi (J^{(n)}), \textbf{V}\right)$, so
\begin{align}
\no \widetilde{W_j^{(n)}} &= \sum_{\text{for all \textbf{v}}}W_j^{(n)}  P\left(\textbf{V}=\textbf{v} \bigg{|} \textbf{Y}, \Pi \left(J^{(n)}\right) \right) \\
\nonumber &= \frac{1}{N^{(n)}}\sum_{\text{for all \textbf{v}}} P\left(\textbf{V}=\textbf{v}\bigg{|} \textbf{Y}, \Pi \left(J^{(n)}\right)\right) \\
\nonumber &= \frac {1}{N^{(n)}}.\ \
\end{align}
\end{proof}

To show that no information is leaked, we need to show that the size of set $J^{(n)}$ goes to infinity. This is established in Lemma \ref{lem1}.

\begin{lem}
	\label{lem1}
	If $N^{(n)} \triangleq |J^{(n)}| $, then $N^{(n)} \rightarrow \infty$ with high probability as $n \rightarrow \infty$.  More specifically, there exists $\lambda>0$ such that
\[
	P\left(N^{(n)} > \frac{\lambda}{2}n^{\frac{\beta}{2}}\right) \rightarrow 1.
	\]
\end{lem}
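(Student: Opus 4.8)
The plan is to show that $N^{(n)} = |J^{(n)}|$ is a sum of $n$ i.i.d.\ indicator random variables whose common success probability is of order $n^{-(1-\beta/2)}$, so that $\mathbb{E}[N^{(n)}]$ grows like $n^{\beta/2}$, and then to use a concentration argument to conclude that $N^{(n)}$ exceeds half its mean with probability tending to one.

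First I would write $N^{(n)} = \sum_{u=1}^n \mathbf{1}_u$, where $\mathbf{1}_u$ is the indicator of the event $\{p_1 \le P_u \le p_1 + \epsilon_n\} \cap \{p_1 + \epsilon_n \le Q_u \le p_1 + (1-2p_1)a_n\}$. Since the pairs $(P_u, R_u)$ are drawn independently across $u$ (and $Q_u$ is a deterministic function of $(P_u, R_u)$), the $\mathbf{1}_u$ are i.i.d.\ Bernoulli with parameter $q_n \triangleq P(\mathbf{1}_u = 1)$. Next I would lower bound $q_n$. Conditioned on $P_u = p_u$ with $p_1 \le p_u \le p_1 + \epsilon_n$, the variable $Q_u$ is uniform on $[p_u, p_u + (1-2p_u)a_n]$, an interval of length $(1-2p_u)a_n \ge (1-2p_1 - 2\epsilon_n)a_n$; for such $p_u$ the sub-interval $[p_1+\epsilon_n,\, p_1 + (1-2p_1)a_n]$ lies inside the support (for large $n$, since $\epsilon_n \ll a_n$), and has length at least $(1-2p_1)a_n - \epsilon_n - \epsilon_n \cdot O(1)$, which is $(1-2p_1)a_n(1 - o(1))$ because $\epsilon_n/a_n = (c' n^{\beta/2})^{-1} \to 0$. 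Hence the conditional probability that the $Q_u$ constraint holds is at least some constant $\kappa \in (0,1)$ for all large $n$. Integrating over $p_u \in [p_1, p_1+\epsilon_n]$ against the density $f_P$, which is bounded below by $\delta_1$, gives $q_n \ge \kappa \delta_1 \epsilon_n$. Setting $\lambda \triangleq \kappa \delta_1 / c'$ (up to harmless constant adjustments so that the bound in the statement comes out with the stated $\lambda$), we get $\mathbb{E}[N^{(n)}] = n q_n \ge \lambda n^{\beta/2}$, using $\epsilon_n = n^{-(1-\beta/2)}$.

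Then I would invoke a concentration inequality for sums of i.i.d.\ indicators — the Chernoff bound is the cleanest: $P\big(N^{(n)} \le \tfrac12 \mathbb{E}[N^{(n)}]\big) \le \exp\!\big(-\mathbb{E}[N^{(n)}]/8\big)$. Since $\mathbb{E}[N^{(n)}] \ge \lambda n^{\beta/2} \to \infty$ as $n \to \infty$ (here $\beta > 0$ is essential), this probability tends to $0$, so $P\big(N^{(n)} > \tfrac{\lambda}{2} n^{\beta/2}\big) \ge P\big(N^{(n)} > \tfrac12 \mathbb{E}[N^{(n)}]\big) \to 1$, which is exactly the claim. (One can equivalently use Chebyshev's inequality, since $\var(N^{(n)}) = n q_n(1-q_n) \le \mathbb{E}[N^{(n)}]$, giving a bound of order $1/\mathbb{E}[N^{(n)}] \to 0$; I would use whichever the paper's style prefers.)

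I expect the only real obstacle to be the careful bookkeeping in the lower bound on $q_n$: one must check that for every $p_u$ in $[p_1, p_1+\epsilon_n]$ the target interval $[p_1+\epsilon_n,\, p_1+(1-2p_1)a_n]$ is actually a sub-interval of the support $[p_u, p_u+(1-2p_u)a_n]$ of $Q_u \mid P_u = p_u$ — i.e.\ that $p_u \le p_1 + \epsilon_n$ (immediate) and that $p_1 + (1-2p_1)a_n \le p_u + (1-2p_u)a_n$, which follows because $p_u \ge p_1$ and $(1-2p_u)a_n \ge (1-2p_1)a_n - 2\epsilon_n a_n$, leaving a slack that is dominated by the gap once $n$ is large. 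The fact that $p_1 < \tfrac12$ is strict (so $1 - 2p_1 > 0$) and that $\epsilon_n / a_n \to 0$ are exactly what make this work; everything else is routine. This is essentially the same computation already used to show $P(1 \in J^{(n)}) \to 1$, just now applied uniformly over a neighborhood of $p_1$ and combined with the lower bound $\delta_1$ on $f_P$.
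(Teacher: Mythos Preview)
Your proposal is correct and follows essentially the same route as the paper: the paper also writes $N^{(n)}$ as a binomial sum of indicators, lower-bounds the per-user success probability by first bounding $P(p_1\le P_u\le p_1+\epsilon_n)\ge \delta_1\epsilon_n$ and then showing $P(B\mid P_u=p_u)\ge 1-\epsilon_n/[(1-2p_1)a_n]\to 1$, obtains $E[N^{(n)}]\ge \lambda n^{\beta/2}$, and finishes with the Chernoff bound $P(N^{(n)}\le \tfrac12 E[N^{(n)}])\le e^{-E[N^{(n)}]/8}$. The only cosmetic difference is that you package the conditional $Q_u$-probability as ``at least some constant $\kappa$'' while the paper writes out the explicit ratio $\frac{(1-2p_1)a_n-\epsilon_n}{(1-2p_u)a_n}$; your extra care in verifying the interval containment is, if anything, more explicit than the paper's.
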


\begin{proof}
Lemma \ref{lem1} is proved in Appendix \ref{sec:app_c}.
\end{proof}

In the final step, we define $\widehat{W_j^{(n)}}$ for any $j \in \Pi (J^{(n)})$ as
\[\widehat{W_j^{(n)}}=P\left(X_1(k)=1 \bigg{|} \textbf{Y}, \Pi (J^{(n)})\right).\]
$\widehat{W_j^{(n)}}$ is the conditional probability that $X_1(k)=1$ after observing the values of the anonymized version of the obfuscated samples of the users' data ($\textbf{Y}$) and the aggregate set including all of the pseudonyms of the users in set $J^{(n)}$ ($\Pi (J^{(n)})$). $\widehat{W_j^{(n)}} $ is a random variable because $\textbf{Y}$ and $\Pi (J^{(n)})$ are random. Now, in the following lemma, we will prove $\widehat{W_j^{(n)}}$ converges in distribution to $p_1$.

Note that this is the probability from the adversary's point of view. That is, given that the adversary has observed $\textbf{Y}$ as well as the extra information $ \Pi (J^{(n)})$, what can he/she infer about $X_1(k)$?
\begin{lem}
	\label{lem4}
	For all $j \in \Pi (J^{(n)})$, $\widehat{W_j^{(n)}} \xrightarrow{d} p_1.$
\end{lem}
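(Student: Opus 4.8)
The plan is to condition on the extra information $\Pi(J^{(n)})$ and the obfuscated probabilities $\textbf{V}$, and use Lemma \ref{lem2} (or rather Lemma \ref{lem3}) to reduce the adversary's inference about $X_1(k)$ to a uniform average over the users in $J^{(n)}$. First I would write, by the law of total probability over which pseudonym $j \in \Pi(J^{(n)})$ actually equals $\Pi(1)$,
\begin{align}
\no \widehat{W_j^{(n)}} &= P\left(X_1(k)=1 \,\bigg{|}\, \textbf{Y}, \Pi(J^{(n)})\right) \\
\no &= \sum_{j \in \Pi(J^{(n)})} P\left(X_1(k)=1 \,\bigg{|}\, \textbf{Y}, \Pi(J^{(n)}), \Pi(1)=j\right) \, P\left(\Pi(1)=j \,\bigg{|}\, \textbf{Y}, \Pi(J^{(n)})\right).
\end{align}
The second factor is exactly $\widetilde{W_j^{(n)}}$, which by Lemma \ref{lem3} equals $\frac{1}{N^{(n)}}$. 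For the first factor, given $\Pi(1)=j$ the observed sequence $\textbf{Y}_j$ is just the obfuscated data of user $1$, so $P(X_1(k)=1 \mid \textbf{Y}, \Pi(J^{(n)}), \Pi(1)=j)$ is the posterior on $X_1(k)$ given $m(n)$ i.i.d.\ $Bernoulli(Q_1)$ observations (the entries of $\textbf{Y}_j$) together with the knowledge that $P_1 = p_1$. Thus $\widehat{W_j^{(n)}}$ becomes an average over $u \in J^{(n)}$ of the posterior probability that the $k$-th symbol of the sequence with pseudonym in $\Pi(J^{(n)})$ matched to user $1$'s slot equals $1$, but since we do not know the matching, this collapses to $\frac{1}{N^{(n)}}\sum_{u \in J^{(n)}} \big(\text{posterior mean of } X_u(k) \text{ given } \textbf{Y} \text{ and } p_u\big)$ — more carefully, to the average over $u \in J^{(n)}$ of $E[X_1(k) \mid \textbf{Y}_{\Pi(u)}, P_1 = p_1]$.

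Next I would control each term in this average. Conditioned on $P_1 = p_1$, we have $X_1(k) \sim Bernoulli(p_1)$ and, given $Q_1$, the sequence $\textbf{Y}_{\Pi(1)}$ consists of $m(n)$ i.i.d.\ $Bernoulli(Q_1)$ draws; moreover $E[X_1(k) \mid Q_1] = $ (by the obfuscation model, inverting $Q_1 = p_1 + (1-2p_1)R_1$ and the channel) a quantity that, crucially, averages back to $p_1$: indeed $E[X_1(k)\mid \textbf{Y}_{\Pi(1)}, P_1=p_1]$ has expectation $p_1$ since $E[X_1(k)\mid P_1 = p_1] = p_1$ and we are conditioning on strictly less than the full sequence... but actually we condition on the sequence, so I need the sharper statement that the \emph{fluctuations} of this posterior mean around $p_1$ are vanishing. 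The key point: the posterior mean of $X_1(k)$ given $\textbf{Y}_{\Pi(1)}$ depends on $\textbf{Y}_{\Pi(1)}$ only through its empirical mean $\widehat{Q}$, which concentrates around $Q_1 \in [p_1, p_1 + (1-2p_1)a_n]$, and since $a_n \to 0$ we get $\widehat{Q} \to p_1$ in probability; then since $E[X_1(k) \mid Q_1] $ is continuous in $Q_1$ and equals $p_1$ at $Q_1 = p_1$ (because obfuscation with $R_1 = 0$ is the identity), each posterior mean converges to $p_1$. Averaging over $u \in J^{(n)}$, whose parameters $p_u$ all lie in $[p_1, p_1 + \epsilon_n]$ with $\epsilon_n \to 0$, the same bound holds uniformly, so the average converges in distribution (indeed in probability) to $p_1$.

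The main obstacle I anticipate is making the reduction in the first paragraph fully rigorous: one must justify that, \emph{given} $\Pi(J^{(n)})$ and $\textbf{Y}$, the conditional law of $X_1(k)$ is genuinely the $\frac{1}{N^{(n)}}$-mixture claimed, which requires the independence fact already used in Lemma \ref{lem3} (namely that $\textbf{Y}$ and $\Pi(J^{(n)})$ are conditionally independent given $\textbf{V}$, and symmetrically that the unobfuscated-to-obfuscated channel is the same for all users in $J^{(n)}$) together with the exchangeability of the users in $J^{(n)}$ established via Lemma \ref{lemOnePointFive}. The second delicate point is the uniform-in-$u$ concentration argument: since $N^{(n)}$ is itself random and $\to \infty$ (Lemma \ref{lem1}), I would condition on the high-probability event $\{N^{(n)} > \tfrac{\lambda}{2} n^{\beta/2}\}$ (legitimate by Lemma \ref{lemx}) and bound the deviation of $\widehat{W_j^{(n)}}$ from $p_1$ by $\max_{u \in J^{(n)}} \big| E[X_1(k)\mid \textbf{Y}_{\Pi(u)}, P_1=p_1] - p_1 \big|$, which is $O(\epsilon_n) + O(a_n) + o(1)$ and hence $\to 0$; convergence in distribution to the constant $p_1$ follows. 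The remaining steps (plugging this into the mutual information bound to finish Theorem \ref{two_state_thm}) are presumably handled after this lemma.
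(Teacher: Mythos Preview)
Your decomposition via the law of total probability and Lemma~\ref{lem3} matches the paper exactly, and that part is fine. The gap is in how you handle the first factor $P\big(X_1(k)=1 \,\big|\, \Pi(1)=j,\textbf{Y},\Pi(J^{(n)})\big)$.

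You claim that this posterior ``depends on $\textbf{Y}_{\Pi(1)}$ only through its empirical mean $\widehat{Q}$,'' and hence converges to $p_1$ term-by-term. That is false. Conditioning on $\Pi(1)=j$ and on $\textbf{Y}$ means in particular conditioning on the single observation $Y_j(k)$ at time $k$, and since the obfuscation channel satisfies $P(Z_1(k)\neq X_1(k))=R_1\le a_n\to 0$, Bayes' rule gives
\[
P\big(X_1(k)=1 \,\big|\, \Pi(1)=j,\textbf{Y},\Pi(J^{(n)})\big)=
\begin{cases}
1-o(1), & Y_j(k)=1,\\
o(1), & Y_j(k)=0.
\end{cases}
\]
In other words each term of your sum is essentially the indicator $\mathbb{1}\{Y_j(k)=1\}\in\{0,1\}$, not a quantity near $p_1$. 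Your proposed bound $\max_{u\in J^{(n)}}\big|E[X_1(k)\mid \textbf{Y}_{\Pi(u)},P_1=p_1]-p_1\big|$ is therefore of order $\max(p_1,1-p_1)$ and does not vanish; you have conflated $E[X_1(k)\mid \textbf{Y}_j]$ (which tracks $Y_j(k)$) with $E[X_1(k)\mid Q_1]$ (which indeed equals $p_1$).

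The paper's actual mechanism for convergence to $p_1$ is different: after plugging in the display above,
\[
\widehat{W_j^{(n)}}=\frac{1}{N^{(n)}}\sum_{j\in\Pi(J^{(n)})}\mathbb{1}\{Y_j(k)=1\}+o(1),
\]
and this is an average, \emph{across the $N^{(n)}$ users in $J^{(n)}$ at the fixed time $k$}, of independent Bernoulli variables with parameters $Q_{\Pi^{-1}(j)}\in[p_1,p_1+(1-2p_1)a_n]$. Since $N^{(n)}\to\infty$ (Lemma~\ref{lem1}) and all these parameters are $p_1+o(1)$, Chebyshev gives $\widehat{W_j^{(n)}}\xrightarrow{d} p_1$. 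So the law of large numbers is taken over users, not over time; your empirical-mean-over-time argument plays no role.
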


\begin{proof}
We know
\begin{align*}
\widehat{W_j^{(n)}}= \sum_{j \in \Pi(J^{(n)})} P\left(X_1(k)=1 \bigg{|} \Pi(1)=j, \textbf{Y}, \Pi (J^{(n)})\right) P\left(\Pi(1)=j \bigg{|} \textbf{Y}, \Pi (J^{(n)})\right),
\end{align*}
and according to the definition $\widetilde{W_j^{(n)}}=P \left(\Pi(1)=j \bigg{|} \textbf{Y}, \Pi (J^{(n)})\right)$, we have
	\begin{align}
	\no \widehat{W_j^{(n)}} &= \sum_{j \in \Pi(J^{(n)})} P\left(X_1(k)=1 \bigg{|} \Pi(1)=j, \textbf{Y}, \Pi (J^{(n)})\right) \widetilde{W_j^{(n)}}\\
	\nonumber &= \frac{1}{N^{(n)}} \sum_{j \in \Pi(J^{(n)})} P\left(X_1(k)=1 \bigg{|} \Pi(1)=j, \textbf{Y}, \Pi (J^{(n)})\right).
	\end{align}
We now claim that
\[
 P\left(X_1(k)=1 \bigg{|} \Pi(1)=j, \textbf{Y}, \Pi (J^{(n)})\right)=p_1+o(1).
\]
The reasoning goes as follows. Given $\Pi(1)=j$ and knowing $\textbf{Y}$, we know that

\[
Y_{\Pi(1)}(k)={Z}_{1}(k)=\begin{cases}
{X}_{1}(k), & \textrm{with probability } 1-R_1.\\
1-{X}_{1}(k), & \textrm{with probability } R_1.
\end{cases}
\]

Thus, given $Y_{j}(k)=1$, Bayes' rule yields:
\begin{align*}
 P\left(X_1(k)=1 \bigg{|} \Pi(1)=j, \textbf{Y}, \Pi (J^{(n)})\right)&= \left(1- R_1 \right) \frac{P(X_1(k)=1)}{P(Y_{\Pi(1)}(k)=1)}\\
 &=\left(1-R_1 \right) \frac{p_1}{p_1 (1- R_1)+(1-p_1)R_1}\\
 &=1-o(1),
\end{align*}
and similarly, given $Y_{j}(k)=0$,
\begin{align*}
P\left(X_1(k)=1 \bigg{|} \Pi(1)=j, \textbf{Y}, \Pi (J^{(n)})\right)&= R_1 \frac{P(X_1(k)=1)}{P(Y_{\Pi(1)}(k)=0)}\\
&=R_1 \frac{p_1}{p_1 (1- R_1)+(1-p_1)R_1}\\
&=o(1).
\end{align*}
Note that by the independence assumption, the above probabilities do not depend on the other values of $Y_{u}(k)$ (as we are conditioning on $\Pi(1)=j$ ).
Thus, we can write
\begin{align}
	\no \widehat{W_j^{(n)}} &= \frac{1}{N^{(n)}} \sum_{j \in \Pi(J^{(n)})} P\left(X_1(k)=1 \bigg{|} \Pi(1)=j, \textbf{Y}, \Pi (J^{(n)})\right)\\
\no &=\frac{1}{N^{(n)}} \sum_{j \in \Pi(J^{(n)}), Y_{j}(k)=1}  (1-o(1)) + \frac{1}{N^{(n)}} \sum_{j \in \Pi(J^{(n)}), Y_{j}(k)=0}  o(1).
\end{align}
First, note that since $\abs*{\left\{j \in \Pi(J^{(n)}), Y_{j}(k)=0\right\}} \leq N^{(n)}$, the second term above converges to zero, thus,
\begin{align}
	\no \widehat{W_j^{(n)}}  \rightarrow \frac{\abs*{\left\{ j \in \Pi(J^{(n)}), Y_{\Pi(1)}(k)=1\right\} }}{N^{(n)}}.
\end{align}
Since for all $j \in \Pi(J^{(n)})$, $ Y_{j}(k)  \sim Bernoulli \left(p_1+o(1)\right)$, by a simple application of Chebyshev's inequality, we can conclude $\widehat{W_j^{(n)}}\rightarrow p_1$. Appendix \ref{sec:app_d} provides the detail.
\end{proof}	

As a result,
\begin{align*}
X_1(k) {|} \textbf{Y}, \Pi (J^{(n)})\rightarrow \textit{Bernoulli} (p_1),
\end{align*}
thus,
\[H\left(X_1(k) \bigg{|} \textbf{Y}, \Pi (J^{(n)})\right)\rightarrow H\left(X_1(k)\right).\]
Since conditioning reduces entropy,
\[H\left(X_1(k) \bigg{|} \textbf{Y}, \Pi (J^{(n)})\right)\leq H\left(X_1(k) \bigg{|} \textbf{Y}\right),\]
and as a result,
\[\lim_{n \rightarrow \infty}   H\left(X_1(k)\right)-H\left(X_1(k) \bigg{|} \textbf{Y}\right) \leq 0,\]
and 
\[\lim_{n \rightarrow \infty}   I\left(X_1(k);\textbf{Y}\right)\leq 0.\]
By knowing that $I\left(X_1(k);\textbf{Y}\right)$ cannot take any negative value, we can conclude that
\[I\left(X_1(k);\textbf{Y}\right)\rightarrow 0.\]
\end{proof}

\subsection{Extension to $r$-States}
Now, assume users' data samples can have $r$ possibilities $\left(0, 1, \cdots, r-1\right)$, and $p_u(i)$ shows the probability of user $u$ having data sample $i$. We define the vector $\textbf{p}_u$ and the matrix $\textbf{p}$ as
\[\textbf{p}_u= \begin{bmatrix}
p_u(1) \\ p_u(2) \\ \vdots \\p_u(r-1) \end{bmatrix} , \ \ \  \textbf{p} =\left[ \textbf{p}_{1}, \textbf{p}_{2}, \cdots,  \textbf{p}_{n}\right].
\]
We assume $p_u(i)$'s are drawn independently from some continuous density function, $f_\textbf{P}(\textbf{p}_u)$, which has support on a subset of the $(0,1)^{r-1}$ hypercube (Note that the $p_u(i)$'s sum to one, so one of them can be considered as the dependent value and the dimension is $r-1$). In particular, define the range of the distribution as
\begin{align}
\no  \mathcal{R}_{\textbf{p}} &= \{ (x_1,x_2,  \cdots, x_{r-1}) \in (0,1)^{r-1}:  x_i > 0 , x_1+x_2+\cdots+x_{r-1} < 1,\ i=1, 2, \cdots, r-1\}.
\end{align}
Figure~\ref{fig:rp} shows the range $\mathcal{R}_{\textbf{p}}$ for the case where $r=3$.
\begin{figure}[h]
	\centering
	\includegraphics[width = 0.5\linewidth]{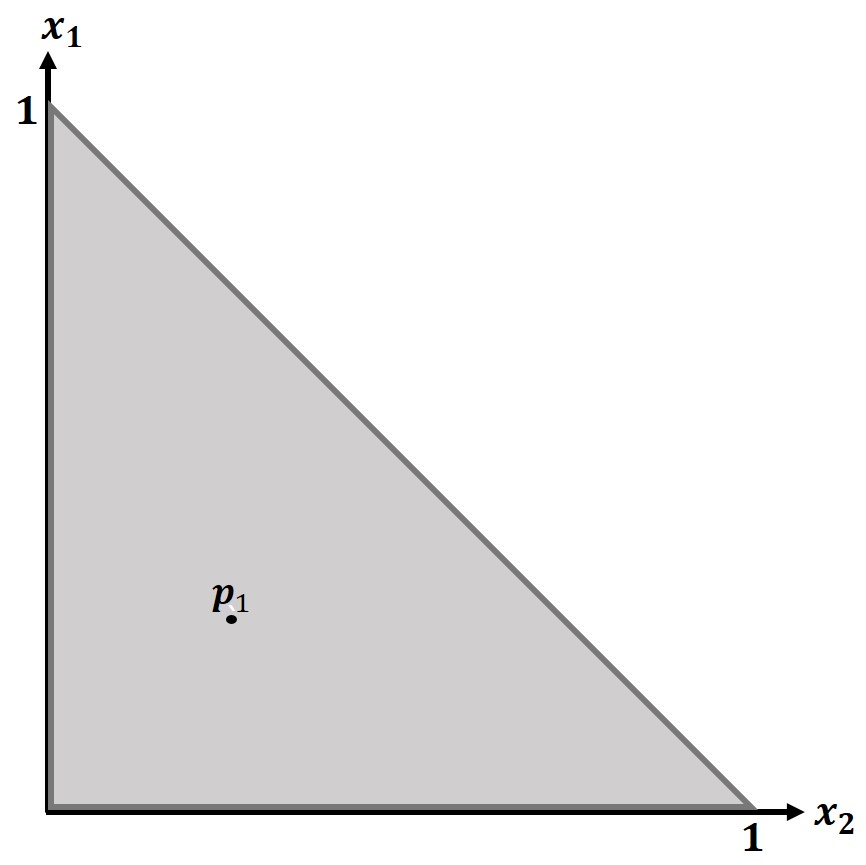}
	\caption{$\mathcal{R}_{\textbf{p}}$ for case $r=3$.}
	\label{fig:rp}
\end{figure}

Then, we assume there are $\delta_1, \delta_2>0$ such that:
\begin{equation}
\begin{cases}
\no    \delta_1<f_{\textbf{P}}(\mathbf{p}_u) <\delta_2, & \textbf{p}_u \in \mathcal{R}_{\textbf{p}}.\\
    f_{\textbf{P}}(\mathbf{p}_u)=0, &  \textbf{p}_u \notin \mathcal{R}_{\textbf{p}}.
\end{cases}
\end{equation}

The obfuscation is similar to the two-states case. Specifically, for $l \in \{0,1,\cdots, r-1\}$, we can write
\[
P({Z}_{u}(k)=l| X_{u}(k)=i) =\begin{cases}
1-R_u, & \textrm{for } l=i.\\
\frac{R_u}{r-1}, & \textrm{for } l \neq i.
\end{cases}
\]

\begin{thm}\label{r_state_thm}
For the above $r$-states model, if $\textbf{Z}$ is the obfuscated version of $\textbf{X}$, and $\textbf{Y}$ is the anonymized version of $\textbf{Z}$ as defined previously, and
\begin{itemize}
	 \item $m=m(n)$ is arbitrary;
	\item $R_u \sim Uniform [0, a_n]$, where $a_n \triangleq c'n^{-\left(\frac{1}{r-1}-\beta\right)}$ for any $c'>0$ and $0<\beta<\frac{1}{r-1}$;
\end{itemize}
then, user 1 has perfect privacy. That is,
\begin{align}
\no  \forall k\in \mathbb{N}, \ \ \ \lim\limits_{n\rightarrow \infty} I \left(X_1(k);{\textbf{Y}}\right) =0.
\end{align}\end{thm}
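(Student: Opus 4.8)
\textbf{Proof proposal for Theorem \ref{r_state_thm}.}

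The plan is to mirror the structure of the proof of Theorem \ref{two_state_thm}, replacing scalar probabilities by the $(r-1)$-dimensional vectors $\textbf{p}_u$ and $\textbf{Q}_u$, where $\textbf{Q}_u$ is the vector of obfuscated symbol probabilities for user $u$. First I would record the conditional law of $\textbf{Q}_u$ given $\textbf{P}_u = \textbf{p}_u$: since obfuscation replaces a symbol with a uniformly-chosen wrong symbol with probability $R_u$, we get $Q_u(i) = p_u(i)(1-R_u) + (1-p_u(i))\frac{R_u}{r-1} = p_u(i) + \left(\frac{1}{r-1} - \frac{r}{r-1}p_u(i)\right)R_u$, so each component is an affine function of the single scalar $R_u \sim Uniform[0,a_n]$. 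Thus $\textbf{Q}_u \mid \textbf{P}_u = \textbf{p}_u$ is uniformly distributed along a line segment of length proportional to $a_n$ in the direction $\textbf{d}(\textbf{p}_u) = \left(\frac{1}{r-1} - \frac{r}{r-1}p_u(i)\right)_{i=1}^{r-1}$.

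Next I would define the critical set $J^{(n)}$ of users whose parameter vector $\textbf{p}_u$ lies in a small neighborhood of $\textbf{p}_1$ — of linear size $\epsilon_n$ in each coordinate — and whose realized $\textbf{Q}_u$ lands in the common intersection of all the line-segment supports for users in that neighborhood. The key geometric point is that the segments all point in nearly the same direction $\textbf{d}(\textbf{p}_1) + O(\epsilon_n)$ and have length $\Theta(a_n)$, so a box of side $\epsilon_n$ around $\textbf{p}_1$ contributes an intersection of the segment supports of relative length $1 - O(\epsilon_n / a_n)$; choosing $\epsilon_n = n^{-(\frac{1}{r-1} - \frac{\beta}{2} + \text{something})}$ so that $\epsilon_n / a_n \to 0$ while $n \epsilon_n^{r-1} \to \infty$ guarantees both that $1 \in J^{(n)}$ with high probability and that $N^{(n)} = |J^{(n)}| \to \infty$ with high probability (this is the analogue of Lemma \ref{lem1}, via a binomial/Chebyshev argument using $\delta_1 < f_{\textbf{P}} < \delta_2$, since the expected size is $\Theta(n \epsilon_n^{r-1} \cdot a_n / a_n) = \Theta(n\epsilon_n^{r-1})$). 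The constraint $\beta < \frac{1}{r-1}$ is exactly what makes this window nonempty. Then Lemma \ref{lemOnePointFive} still applies verbatim: conditioned on the unordered set of observed $\textbf{Q}$-values for users in $J^{(n)}$ all lying in the common intersection, every matching of values to users is equally likely, so the analogues of Lemmas \ref{lem2} and \ref{lem3} give $\widetilde{W_j^{(n)}} = \frac{1}{N^{(n)}}$ for $j \in \Pi(J^{(n)})$ — the argument that $\Pi(J^{(n)})$ and $\textbf{Y}$ are conditionally independent given $\textbf{V}$ carries over unchanged.

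Finally, the analogue of Lemma \ref{lem4}: conditioned on $\Pi(1) = j$ and on $\textbf{Y}$, Bayes' rule on the $r$-ary symmetric channel gives $P(X_1(k) = i \mid \Pi(1) = j, \textbf{Y}, \Pi(J^{(n)})) = p_1(i) + o(1)$ (the channel with vanishing $R_1$ acts as the identity in the limit), so averaging over $j \in \Pi(J^{(n)})$ and using $\widetilde{W_j^{(n)}} = \frac{1}{N^{(n)}}$ together with a Chebyshev concentration of the empirical symbol frequencies among the $N^{(n)} \to \infty$ users in $\Pi(J^{(n)})$ shows $\widehat{W_i^{(n)}} \xrightarrow{d} p_1(i)$ for each $i$. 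Hence $X_1(k) \mid \textbf{Y}, \Pi(J^{(n)})$ converges in distribution to the original law of $X_1(k)$, so $H(X_1(k) \mid \textbf{Y}, \Pi(J^{(n)})) \to H(X_1(k))$, and since conditioning reduces entropy $I(X_1(k); \textbf{Y}) \to 0$, invoking Lemma \ref{lemx} to pass from the high-probability events to the unconditional statement. The main obstacle I anticipate is the geometric bookkeeping in defining $J^{(n)}$ correctly: unlike the one-dimensional case, one must ensure simultaneously that (i) the $(r-1)$-dimensional neighborhood of $\textbf{p}_1$ is large enough that it contains $\Theta(n\epsilon_n^{r-1}) \to \infty$ users, and (ii) it is small enough (relative to $a_n$) that all the slightly-misaligned segment supports still share a common intersection region of relative measure $\to 1$ into which the observed $\textbf{Q}_u$ fall with non-vanishing probability; balancing these forces the precise exponent $a_n = c' n^{-(\frac{1}{r-1} - \beta)}$ and is where the scaling in the theorem comes from.
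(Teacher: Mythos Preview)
Your high-level strategy matches the paper's sketch: replace the scalars $P_u,Q_u,V_u$ by vectors, take $\epsilon_n = n^{-(1/(r-1)-\beta/2)}$, build $J^{(n)}$ from per-coordinate constraints (the paper writes $J^{(n)}=\bigcap_i J_i^{(n)}$), and then ``repeat the same arguments in the proof of Theorem~\ref{two_state_thm}.''

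There is, however, a genuine gap at the point where you assert that Lemma~\ref{lemOnePointFive} ``still applies verbatim'' because the segment supports share ``a common intersection region of relative measure $\to 1$.'' For $r>2$ this is geometrically false. As you correctly compute, $\textbf{Q}_u \mid \textbf{P}_u=\textbf{p}_u$ is uniform on a \emph{one}-dimensional segment $L_u\subset\mathbb{R}^{r-1}$, namely the portion of the ray from $\textbf{p}_u$ toward the simplex center $(\tfrac1r,\dots,\tfrac1r)$ with parameter $R_u\in[0,a_n]$. Distinct base points $\textbf{p}_u\neq\textbf{p}_{u'}$ determine distinct lines through the center, so $L_u\cap L_{u'}=\emptyset$ almost surely (the center itself lies at parameter $\tfrac{r-1}{r}\gg a_n$ and is on neither short segment). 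There is no common intersection region; every observed $\textbf{v}_j$ lies on exactly one $L_u$, and conditional on $\textbf{V}$ the matching is deterministic, so the analogue of Lemma~\ref{lem2} would give $W_j^{(n)}\in\{0,1\}$, not $\tfrac1{N^{(n)}}$. Taking $\epsilon_n$ small makes the segments nearly parallel, but nearly-parallel (or even exactly parallel) segments in $\mathbb{R}^{r-1}$ with distinct base points are still disjoint, so shrinking $\epsilon_n/a_n$ does not produce overlap. The paper's terse sketch does not spell out how Lemma~\ref{lemOnePointFive} is to be invoked on vectors; your attempt to make that step explicit exposes that the one-dimensional exchangeability argument does not transfer directly, and some different route to uniformity of the posterior over $\Pi(J^{(n)})$---one that does not condition on the full vector $\textbf{V}$---would be needed.
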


The proof of Theorem \ref{r_state_thm} is similar to the proof of Theorem \ref{two_state_thm}. The major difference is that instead of the random variables $P_u, Q_u, V_u$, we need to consider the random vectors $\textbf{P}_u, \textbf{Q}_u, \textbf{V}_u$.  Similarly, for user $u$, we define the vector $\textbf{Q}_u$ as
\[\textbf{Q}_u= \begin{bmatrix}
Q_u(1) \\ Q_u(2) \\ \vdots \\Q_u(r-1) \end{bmatrix}.
\]

In the $r$-states case,
\begin{align}
\no {Q}_u(i) &=P_u(i)\bigg(1-R_u(i) \bigg)+\bigg(1-P_u(i)\bigg)\frac{R_u}{r-1}  \\
\nonumber &= P_u+\bigg(1-r P_u\bigg)\frac{R_u}{r-1}.\ \
\end{align}
We also need to define the critical set $J^{(n)}$.  First, for $i=0,1, \cdots, r-1$, define set $J_i^{(n)}$ as follows. If $0\leq p_1(i)<\frac{1}{r}$, then,
\begin{align*}
&J_i^{(n)}= \\
&\left\{u \in \{1, 2, \dots, n\}: p_1(i) \leq P_u(i)\leq p_1(i)+\epsilon_n; p_1(i)+\epsilon_n\leq Q_u(i)\leq p_1(i)+(1-r p_1(i))\frac{a_n}{r-1}\right\},
\end{align*}
where $\epsilon_n \triangleq \frac{1}{n^{\frac{1}{r-1}-\frac{\beta}{2}}}$,  $a_n = c'n^{-\left(\frac{1}{r-1}-\beta\right)}$, and $\beta$ is defined in the statement of Theorem \ref{r_state_thm}.

We then define the critical set $J^{(n)}$ as:
\[
J^{(n)}=\bigcap_{l=0}^{r-1} J_i^{(n)}.
\]
We can then repeat the same arguments in the proof of Theorem \ref{two_state_thm} to complete the proof.


\section{Converse Results: No Privacy Region}
\label{converse}

In this section, we prove that if the number of observations by the adversary is larger than its critical value and the noise level is less than its critical value, then the adversary can find an algorithm to successfully estimate users' data samples with arbitrarily small error probability. Combined with the results of the previous section, this implies that asymptotically (as $n \rightarrow \infty$), privacy can be achieved \emph{if and only if} at least one of  the two techniques (obfuscation or anonymization) are used above their thresholds. This statement needs a clarification as follows:  Looking at the results of \cite{tifs2016}, we notice that anonymization alone can provide perfect privacy if $m(n)$ is below its threshold. On the other hand, the threshold for obfuscation requires some anonymization: In particular, the identities of the users must be permuted once to prevent the adversary from readily identifying the users.


\subsection{Two-States Model}
Again, we start with the i.i.d.\ two-states model. The data sample of user $u$ at any time is a Bernoulli random variable with parameter $p_u$.

As before, we assume that $p_u$'s are drawn independently from some continuous density function, $f_P(p_u)$, on the $(0,1)$ interval. Specifically, there are $\delta_1, \delta_2>0$ such that:
\begin{equation}
\no\begin{cases}
    \delta_1<f_P(p_u) <\delta_2, & p_u \in (0,1).\\
    f_P(p_u)=0, &  p _u\notin (0,1).
\end{cases}
\end{equation}

\begin{thm}\label{two_state_thm_converse}
For the above two-states mode, if $\textbf{Z}$ is the obfuscated version of $\textbf{X}$, and $\textbf{Y}$ is the anonymized version of $\textbf{Z}$ as defined, and
\begin{itemize}
	\item $m =cn^{2 +  \alpha}$ for any $c>0$ and $\alpha>0$;
	\item $R_u \sim Uniform [0, a_n]$, where $a_n \triangleq c'n^{-\left(1+\beta\right)}$ for any $c'>0$ and $\beta>\frac{\alpha}{4}$;
\end{itemize}
then, user $1$ has no privacy as $n$ goes to infinity.
\end{thm}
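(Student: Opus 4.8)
\textbf{Proof proposal for Theorem \ref{two_state_thm_converse}.}

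The plan is to exhibit an explicit estimator for the adversary and show its error probability vanishes. Since $m(n) = c n^{2+\alpha}$ is large, the adversary can estimate each anonymized sequence's Bernoulli parameter accurately: for pseudonym $j$, form the empirical mean $\overline{Y}_j = \frac{1}{m}\sum_{k=1}^m Y_j(k)$, which concentrates around $V_j = Q_{\Pi^{-1}(j)}$ with fluctuations of order $m^{-1/2} = c^{-1/2} n^{-(1+\alpha/2)}$. Meanwhile, $Q_u = P_u + (1-2P_u)R_u$ differs from $P_u$ by at most $a_n = c' n^{-(1+\beta)}$, since $R_u \in [0,a_n]$. So the observable $\overline{Y}_{\Pi(u)}$ lies within roughly $n^{-(1+\beta)} + n^{-(1+\alpha/2)}$ of $p_u$. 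The strategy is then: the adversary, knowing all the $p_u$'s, assigns pseudonym $j$ to the user $u$ minimizing $|\overline{Y}_j - p_u|$ (a nearest-neighbor / matching rule), and then estimates $X_u(k)$ by the de-obfuscated value $Y_{\Pi(u)}(k)$ — or more simply, since the obfuscation noise $R_1$ is small, just sets $\widetilde{X_1(k)} = Y_{\Pi(1)}(k)$ once $\Pi(1)$ is identified.

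The key steps, in order: (i) Use a Chernoff/Hoeffding bound to show $P(|\overline{Y}_j - V_j| > n^{-(1+\alpha/4)}) \to 0$ uniformly over the $n$ pseudonyms (a union bound over $n$ sequences costs only a polynomial factor, absorbed since $m$ grows faster than any power of $n$); call the resulting deviation bound $\eta_n = n^{-(1+\alpha/4)}$. (ii) Show that with high probability the $n$ parameters $p_1,\dots,p_n$ are \emph{separated} by more than $2(\eta_n + a_n)$: since the $p_u$ are i.i.d.\ from a bounded density on $(0,1)$, the minimum gap between the order statistics is of order $n^{-2}$ with high probability, and $n^{-2}$ dominates $\eta_n + a_n = \Theta(n^{-(1+\alpha/4)}) + \Theta(n^{-(1+\beta)})$ precisely because $\beta > \alpha/4$ forces both terms to be $o(n^{-1})$ — wait, this needs care: $n^{-(1+\alpha/4)}$ is $o(n^{-1})$ but not necessarily $o(n^{-2})$. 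This is the crux and I address it below. (iii) Conclude that the nearest-neighbor assignment recovers the true permutation $\Pi$ with high probability, hence the adversary knows which pseudonym belongs to user $1$. (iv) Finally, having identified $\Pi(1)$, observe $\widetilde{X_1(k)} = Y_{\Pi(1)}(k)$ errs only when obfuscation flipped the sample, i.e.\ with probability $R_1 \le a_n \to 0$; alternatively use a short majority/Bayes argument. Combining, $P^*_e(1,k) \to 0$.

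The main obstacle is step (ii)–(iii): controlling the interplay between the inter-user parameter spacing and the combined estimation-plus-obfuscation error. The naive minimum-gap bound $n^{-2}$ is too weak against $\eta_n = n^{-(1+\alpha/4)}$. The resolution is that the adversary does not need \emph{every} pair $p_u, p_{u'}$ to be well separated; it only needs the estimate $\overline{Y}_{\Pi(1)}$ (near $p_1$ within $\eta_n + a_n$) to have \emph{no other} $p_{u'}$ within distance $\eta_n + a_n$. The number of $p_{u'}$ landing in an interval of length $2(\eta_n + a_n) = \Theta(n^{-(1+\min(\alpha/4,\beta))})$ around $p_1$ is, in expectation, $\Theta(n \cdot n^{-(1+\min(\alpha/4,\beta))}) = \Theta(n^{-\min(\alpha/4,\beta)}) \to 0$, so by Markov's inequality that interval is empty with high probability. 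Since this must hold simultaneously for the purpose of matching all users — or at least for user $1$ and enough others to pin down $\Pi(1)$ — one runs this argument for user $1$ directly (which is all the definition of ``no privacy'' requires) rather than for the whole permutation; this sidesteps the $n^{-2}$ global-gap issue entirely. The condition $\beta > \alpha/4$ is exactly what makes $\min(\alpha/4,\beta) = \alpha/4 > 0$ in the exponent, guaranteeing the expected count vanishes. I would also invoke Lemma~\ref{lemx} from Appendix~\ref{sec:app_a} to reduce everything to these high-probability events, mirroring the achievability proof.
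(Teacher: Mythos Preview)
Your proposal is correct and follows essentially the same route as the paper: compute empirical means $\overline{Y_j}$, use a Chernoff-type bound at scale $\Delta_n = n^{-(1+\alpha/4)}$ to concentrate them around the $Q_u$'s, then use the density bound on $f_P$ to show that with high probability no other $p_{u'}$ lies within $\Theta(\Delta_n)$ of $p_1$ (expected count $\Theta(n\Delta_n)=\Theta(n^{-\alpha/4})\to 0$), so the adversary uniquely identifies $\Pi(1)$ and finally uses $R_1\le a_n\to 0$ to recover $X_1(k)$. The paper phrases the identification rule as ``the unique $\overline{Y_u}$ falling in the interval $B^{(n)}=[p_1-\Delta_n,p_1+\Delta_n]$'' rather than nearest-neighbor, but the two are equivalent once you have the separation and concentration events; your self-correction away from the global minimum-gap argument to the local ``no other $p_{u'}$ near $p_1$'' argument is exactly the right move. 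One small quibble: invoking Lemma~\ref{lemx} is unnecessary here, since that lemma concerns the mutual-information (perfect-privacy) criterion, whereas ``no privacy'' only requires exhibiting an estimator with vanishing error probability, which you do directly.
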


Since this is a converse result, we give an explicit detector at the adversary and show that it can be used by the adversary to recover the true data of user $1$.

\begin{proof}
The adversary first inverts the anonymization mapping $\Pi$ to obtain $Z_1(k)$, and then estimates the value of $X_1(k)$ from that. To invert the anonymization, the adversary calculates the empirical probability that each string is in state $1$ and then assigns the string with the empirical probability closest to $p_1$ to user 1.

\begin{figure}
	\centering
	\includegraphics[width=.8\linewidth]{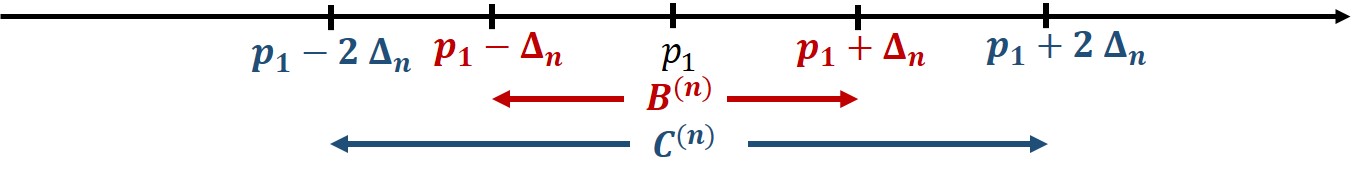}
	\centering
	\caption{$p_1$, sets $B^{(n)}$ and $C^{(n)}$ for case $r=2$.}
	\label{fig:converse}
\end{figure}

Formally, for $u=1, 2, \cdots, n$, the adversary computes $\overline{Y_u}$, the empirical probability of user $u$ being in state $1$, as follows:
\[
\overline{Y_u}=\frac{Y_u(1)+Y_u(2)+ \cdots +Y_u(m)}{m},
\]
thus,
\[
\overline{Y_{\Pi(u)}}=\frac{Z_u(1)+Z_u(2)+ \cdots +Z_u(m)}{m}.
\]

As shown in Figure \ref{fig:converse}, define
\[B^{(n)}\triangleq \left\{x \in (0,1); p_1-\Delta_n \leq x \leq p_1+\Delta_n\right\},\]
where $\Delta_n = \frac{1}{n^{1+\frac{\alpha}{4}}}$ and $\alpha $ is defined in the statement of Theorem \ref{two_state_thm_converse}. We claim that for $m =cn^{2 +  \alpha}$, $a_n=c'n^{-(1 +  \beta)}$, and large enough $n$,
\begin{enumerate}
\item $P\left( \overline{Y_{\Pi(1) }}\in B^{(n)}\right) \rightarrow 1.$
\item $P\left( \bigcup\limits_{u=2}^{n} \left(\overline{Y_{\Pi(u)}}\in B^{(n)}\right)\right) \rightarrow 0.$
\end{enumerate}
As a result, the adversary can identify $\Pi(1)$ by examining  $\overline{Y_u}$'s and assigning the one in $B^{(n)}$ to user $1$. Note that $\overline{Y_{\Pi(u) }} \in B^{(n)}$ is a set (event) in the underlying probability space and can be written as $\left\{\omega \in \Omega: \overline{Y_{\Pi(u) }}(\omega) \in B^{(n)}\right\}$.

First, we show that as $n$ goes to infinity,
\[P\left( \overline{Y_{\Pi(1) }}\in B^{(n)}\right) \rightarrow 1.\]
We can write
\begin{align}
\no P\left(\overline{Y_{\Pi(1)}} \in B^{(n)}\right) &= P\left(\frac{\sum\limits_{k=1}^{m}Z_1(k)}{m} \in  B^{(n)} \right)\\
\nonumber &= P\left(p_1-\Delta_n \leq\frac{\sum\limits_{k=1}^{m}Z_1(k)}{m}\leq p_1+\Delta_n \right)\\
\nonumber &= P\left(mp_1-m\Delta_n-mQ_1\leq \sum\limits_{k=1}^{m}Z_1(k)-mQ_1\leq  mp_1+m\Delta_n-mQ_1 \right).\ \
\end{align}
Note that for any $u \in \{1,2,\cdots, n \}$, we have
\begin{align}
\no |p_u-{Q}_u| &=|1-2p_u|R_u \\
\no & \leq R_u \leq a_n,
\end{align}
so, we can conclude
\begin{align}
\no P\left(\overline{Y_{\Pi(1)}} \in B^{(n)}\right) &= P\left(mp_1-m\Delta_n-mQ_1\leq \sum\limits_{k=1}^{m}Z_1(k)-mQ_1 \leq mp_1+m\Delta_n-mQ_1 \right)\\
\nonumber  &\geq P\left(-m\Delta_n+m a_n\leq \sum\limits_{k=1}^{m}Z_1(k)-mQ_1\leq -ma_n+m\Delta_n \right)\\
\nonumber &= P\left(\abs*{\sum\limits_{k=1}^{m}Z_1(k)-mQ_1}\leq m (\Delta_n-a_n) \right).\ \
\end{align}
Since $a_n \rightarrow 0$, for $p_1 \in (0,1)$ and large enough $n$, we can say $p_1+a_n < 2 p_1$. From Chernoff bound, for any $c,c',\alpha>0$ and $\beta>\frac{\alpha}{4}$,
\begin{align}
\no P\left(\abs*{\sum\limits_{k=1}^{m}Z_1(k)-mQ_1}\leq m (\Delta_n-a_n) \right) &\geq 1-2e^{-\frac{m(\Delta_n-a_n)^2}{3Q_1}} \\
\nonumber &\geq 1-2e^{-\frac{1}{3(p_1+a_n)}cn^{2+\alpha}\left(\frac{1}{n^{1+\frac{\alpha}{4}}}- \frac{c'}{n^{1 +  \beta}}\right)^2}\\
\nonumber &\geq 1-2e^{-\frac{c''}{6 p_1}n^{\frac{\alpha}{2}}}\rightarrow 1. \ \
\end{align}
As a result, as $n$ becomes large,
\[P\left(\overline{Y_{\Pi(1)}} \in B^{(n)}\right) \rightarrow 1.\]

Now, we need to show that as $n$ goes to infinity,
\[P\left( \bigcup\limits_{u=2}^{n} \left(\overline{Y_{\Pi(u)}}\in B^{(n)}\right)\right) \rightarrow 0.\]
First, we define
\[
C^{(n)}=\left\{x\in (0,1); p_1-2\Delta_n \leq x \leq p_1+2\Delta_n\right\}
,\]
and claim as $n$ goes to infinity,
\[P\left(\bigcup\limits_{u=2}^n \left(P_u \in C^{(n)}\right) \right)\rightarrow 0. \]

Note
\[4\Delta_n\delta_1< P\left( P_u\in C^{(n)}\right) < 4 \Delta_n\delta_2,\]
and according to the union bound, for large enough $n$,
\begin{align}
\no P\left( \bigcup\limits_{u=2}^n \left(P_u \in C^{(n)}\right) \right) &\leq \sum\limits_{u=2}^n P\left( P_u \in C^{(n)}\right) \\
\nonumber &\leq 4n \Delta_n \delta_2\\
\nonumber &= 4n \frac{1}{n^{1+{\frac{\alpha}{4}}}} \delta_2\\
\nonumber &= 4n^{-\frac{\alpha}{4}}\delta_2 \rightarrow 0. \ \
\end{align}
As a result, we can conclude that all $p_u$'s are outside of $C^{(n)}$ for $u \in \left\{2,3, \cdots, n\right\}$ with high probability.

Now, we claim that given all $p_u$'s are outside of $C^{(n)}$, $P\left(\overline{Y_{\Pi (u)}} \in B^{(n)}\right)$ is small. Remember that for any $u \in \{1,2,\cdots, n \}$, we have
\begin{align}
\no |p_u-{Q}_u| \leq a_n.
\end{align}
Now, noting the definitions of sets $B^{(n)}$ and $C^{(n)}$, we can write for $u \in \left\{2,3,  \cdots , n\right\}$,
\begin{align}
\nonumber
\no P\left(\overline{Y_{\Pi(u)}} \in B^{(n)}\right) &\leq P\left(\abs*{\overline{Y_{\Pi(u)}}-Q_u}\geq  (\Delta_n-a_n) \right)\\
\nonumber &= P\left(\abs*{\sum\limits_{k=1}^{m}Z_u(k)-mQ_u}> m(\Delta_n-a_n) \right).\ \
\end{align}
According to the Chernoff bound, for any $c,c',\alpha>0$ and $\beta>\frac{\alpha}{4}$,
\begin{align}
\no P\left(\abs*{\sum\limits_{k=1}^{m}Z_u(k)-mQ_u}> m(\Delta_n-a_n) \right) &\leq 2e^{-\frac{m(\Delta_n-a_n)^2}{3Q_1}} \\
\nonumber &\leq 2e^{-\frac{1}{3(p_1+a_n)}cn^{2+\alpha}\left(\frac{1}{n^{1+\frac{\alpha}{4}}}- \frac{c'}{n^{1 +  \beta}}\right)^2}\\
\nonumber &\leq 2e^{-\frac{c''}{6 p_1}n^{\frac{\alpha}{2}}}. \ \
\end{align}
Now, by using a union bound, we have
\begin{align}
\no P\left( \bigcup\limits_{u=2}^n \left(\overline{Y_{\Pi(u)}}\in B^{(n)}\right)\right)&\leq \sum\limits_{u=2}^{n}P\left(\overline{Y_{\Pi(u)}}\in B^{(n)}\right)\\
\nonumber &\leq n\left(2e^{-\frac{c''}{6 p_1}n^{\frac{\alpha}{2}}}\right),\ \
\end{align}
and thus, as $n$ goes to infinity,
\[P\left( \bigcup\limits_{u=2}^n \left(\overline{Y_{\Pi(u)}}\in B^{(n)}\right)\right) \rightarrow 0.\]

So, the adversary can successfully recover $Z_1(k)$. Since $Z_{1}(k)=X_1(k)$ with probability $1-R_1=1-o(1)$, the adversary can recover $X_{1}(k)$ with vanishing error probability for large enough $n$.
\end{proof}

\subsection{Extension to $r$-States}
Now, assume users' data samples can have $r$ possibilities $\left(0, 1, \cdots, r-1\right)$, and $p_u(i)$ shows the probability of user $u$ having data sample $i$. We define the vector $\textbf{p}_u$ and the matrix $\textbf{p}$ as
\[\textbf{p}_u= \begin{bmatrix}
p_u(1) \\ p_u(2) \\ \vdots \\p_u(r-1) \end{bmatrix} , \ \ \  \textbf{p} =\left[ \textbf{p}_{1}, \textbf{p}_{2}, \cdots,  \textbf{p}_{n}\right].
\]
We also assume $\textbf{p}_u$'s are drawn independently from some continuous density function, $f_P(\textbf{p}_u)$, which has support on a subset of the $(0,1)^{r-1}$ hypercube. In particular, define the range of distribution as
\begin{align}
\no  \mathcal{R}_{\textbf{p}} &= \left\{ (x_1, x_2, \cdots, x_{r-1}) \in (0,1)^{r-1}: x_i > 0 , x_1+ x_2+\cdots+ x_{r-1} < 1,\ \ i=1, 2,\cdots, r-1\right\}.
\end{align}
Then, we assume there are $\delta_1, \delta_2>0$ such that:
\begin{equation}
\begin{cases}
\no    \delta_1<f_{\textbf{P}}(\mathbf{p}_u) <\delta_2, & \textbf{p}_u \in \mathcal{R}_{\textbf{p}}.\\
    f_{\textbf{P}}(\mathbf{p}_u)=0, &  \textbf{p}_u \notin \mathcal{R}_{\textbf{p}}.
\end{cases}
\end{equation}

\begin{thm}\label{r_state_thm_converse}
For the above $r$-states mode, if $\textbf{Z}$ is the obfuscated version of $\textbf{X}$, and $\textbf{Y}$ is the anonymized version of $\textbf{Z}$ as defined, and
\begin{itemize}
	\item $m =cn^{\frac{2}{r-1} +  \alpha}$ for any $c>0$ and $0<\alpha<1$;
	\item $R_u \sim Uniform [0, a_n]$, where $a_n \triangleq c'n^{-\left(\frac{1}{r-1}+\beta\right)}$ for any $c'>0$ and $\beta>\frac{\alpha}{4}$;
\end{itemize}
then, user $1$ has no privacy as $n$ goes to infinity. 
\end{thm}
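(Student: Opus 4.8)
The plan is to run the proof of Theorem~\ref{two_state_thm_converse} one coordinate at a time, replacing each scalar empirical probability by the empirical \emph{distribution} over the $r$ symbols. As in the two-state case the adversary first de-anonymizes: for every pseudonym $u$ it forms the vector $\overline{\textbf{Y}_u}=\left(\overline{Y_u}(1),\dots,\overline{Y_u}(r-1)\right)$, where $\overline{Y_u}(i)$ is the fraction of the $m$ symbols observed under pseudonym $u$ that equal $i$, and then it declares $\Pi(1)$ to be the pseudonym whose $\overline{\textbf{Y}_u}$ is closest to $\textbf{p}_1$. Concretely I would first condition (as in Lemma~\ref{lemx}) on the high-probability event that $\textbf{p}_1$ lies in a fixed compact subset of the open simplex, so that all the Chernoff variances below stay bounded away from $0$; then set
\[
B^{(n)}\triangleq\left\{\textbf{x}\in\mathcal{R}_{\textbf{p}}:\ \abs*{x_i-p_1(i)}\le\Delta_n\ \text{ for all }i\right\},\qquad \Delta_n\triangleq n^{-\left(\frac{1}{r-1}+\frac{\alpha}{4}\right)},
\]
let $C^{(n)}$ be the twice-as-wide box obtained by replacing $\Delta_n$ with $2\Delta_n$, and establish the two claims (1) $P\left(\overline{\textbf{Y}_{\Pi(1)}}\in B^{(n)}\right)\to1$ and (2) $P\left(\bigcup_{u=2}^{n}\{\overline{\textbf{Y}_{\Pi(u)}}\in B^{(n)}\}\right)\to0$, after which the detector succeeds.

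For claim~(1): given $\textbf{Q}_1$, each coordinate $\overline{Y_{\Pi(1)}}(i)$ is an average of $m$ i.i.d.\ $\mathrm{Bernoulli}\!\left(Q_1(i)\right)$ variables, and $\abs*{Q_1(i)-p_1(i)}=R_1\tfrac{\abs*{1-rP_1(i)}}{r-1}\le R_1\le a_n$. The multiplicative Chernoff bound gives $\abs*{\overline{Y_{\Pi(1)}}(i)-Q_1(i)}\le\Delta_n-a_n$ except with probability $2e^{-\Theta\left(m(\Delta_n-a_n)^2\right)}$; since $\beta>\frac{\alpha}{4}$ forces $a_n=o(\Delta_n)$, we get $m(\Delta_n-a_n)^2=\Theta\!\left(n^{\frac{2}{r-1}+\alpha}\,n^{-2\left(\frac{1}{r-1}+\frac{\alpha}{4}\right)}\right)=\Theta\!\left(n^{\alpha/2}\right)\to\infty$. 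A union bound over the fixed number $r-1$ of coordinates then upgrades $\abs*{\overline{Y_{\Pi(1)}}(i)-p_1(i)}\le\abs*{\overline{Y_{\Pi(1)}}(i)-Q_1(i)}+a_n\le\Delta_n$ to claim~(1).

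For claim~(2) I would first bound $P\left(\bigcup_{u=2}^{n}\{\textbf{p}_u\in C^{(n)}\}\right)$: since $f_{\textbf{P}}<\delta_2$ on $\mathcal{R}_{\textbf{p}}$, we have $P\left(\textbf{p}_u\in C^{(n)}\right)\le\delta_2\,(4\Delta_n)^{\,r-1}$, so the union bound gives $n\,\delta_2\,4^{\,r-1}\,n^{-\left(1+\frac{(r-1)\alpha}{4}\right)}=\delta_2\,4^{\,r-1}\,n^{-\frac{(r-1)\alpha}{4}}\to0$. On the complementary event every $\textbf{p}_u$ with $u\ge2$ has a coordinate $i$ with $\abs*{p_u(i)-p_1(i)}>2\Delta_n$, so for $\overline{\textbf{Y}_{\Pi(u)}}$ to land in $B^{(n)}$ that coordinate must deviate from $Q_u(i)$ by at least $2\Delta_n-\Delta_n-a_n=\Delta_n-a_n$ (using again $\abs*{p_u(i)-Q_u(i)}\le a_n$); by the same Chernoff estimate this has probability $\le2e^{-\Theta\left(n^{\alpha/2}\right)}$, and a union bound over $u=2,\dots,n$ gives $n\cdot2e^{-\Theta\left(n^{\alpha/2}\right)}\to0$. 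Combining (1) and (2), the adversary recovers $\Pi(1)$, hence $Z_1(k)$, with probability $\to1$; since $Z_1(k)=X_1(k)$ with probability $1-R_1=1-o(1)$, it estimates $X_1(k)$ with vanishing error, and finally one lets the compact set exhaust the simplex, so user~$1$ has no privacy.

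The only genuinely new ingredient relative to $r=2$, and the step I expect to require the most care, is the exponent bookkeeping in claim~(2): the ``collision'' probability $P\left(\textbf{p}_u\in C^{(n)}\right)$ is now the volume of an $(r-1)$-dimensional box, scaling like $\Delta_n^{\,r-1}$ rather than $\Delta_n$, so to keep the per-user collision probability below $1/n$ one is forced to take $\Delta_n$ as large as $n^{-\left(\frac{1}{r-1}+\frac{\alpha}{4}\right)}$; this in turn forces $m$ to grow at least like $n^{\frac{2}{r-1}+\alpha}$ for the concentration in claim~(1) still to beat $\Delta_n$, and forces the obfuscation level to decay faster than $\Delta_n$, i.e.\ $a_n=c'n^{-\left(\frac{1}{r-1}+\beta\right)}$ with $\beta>\frac{\alpha}{4}$. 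Everything else is a coordinate-wise transcription of the two-state argument, with union bounds over the constant number $r-1$ of coordinates and the standard reduction to a high-probability event on $\textbf{p}_1$.
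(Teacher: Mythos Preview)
Your proposal is correct and follows essentially the same approach as the paper: the paper defines the identical empirical vectors $\overline{\textbf{Y}_u}$, the same boxes $B'^{(n)}$ and $C'^{(n)}$ with the same width $\Delta'_n=n^{-\left(\frac{1}{r-1}+\frac{\alpha}{4}\right)}$, and reduces to the same two claims, explicitly saying the argument ``follows that for the two-states case.'' You have in fact supplied more detail than the paper does---in particular the coordinate-wise Chernoff bound with a union over $r-1$ coordinates, the explicit volume computation $P(\textbf{p}_u\in C^{(n)})\le\delta_2(4\Delta_n)^{r-1}$, and the observation that this $(r-1)$-dimensional scaling is precisely what forces the exponents $\frac{2}{r-1}$ and $\frac{1}{r-1}$ in the statement.
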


The proof of Theorem \ref{r_state_thm_converse} is similar to the proof of Theorem \ref{two_state_thm_converse}, so we just provide the general idea. We similarly define the empirical probability that the user with pseudonym $u$ has data sample $i$ $\left(\overline{{Y}_{u}}(i)\right)$ as follows:

\[
\overline{Y_u}(i)=\frac{\abs {\left\{k \in \{1, 2, \cdots, m\}:Y_u(k)=i\right\}}}{m},
\]
thus,
\[
\overline{Y_{\Pi(u)}}(i)=\frac{\abs {\left\{k \in \{1, 2, \cdots, m\}:Y_u(k)=i\right\}}}{m}.
\]

The difference is that now for each $u \in \{1,2,\cdots, n \}$, $\overline{\textbf{Y}_{u}}$ is a vector of size $r-1$. In other words,
\[\overline{\textbf{Y}_{u}}=\begin{bmatrix}
\overline{Y_u}(1) \\ \overline{Y_u}(2) \\ \vdots \\\overline{Y_u}(r-1) \end{bmatrix}.\]

\begin{figure}
  \centering
  \includegraphics[width=.5\linewidth, height=0.5 \linewidth]{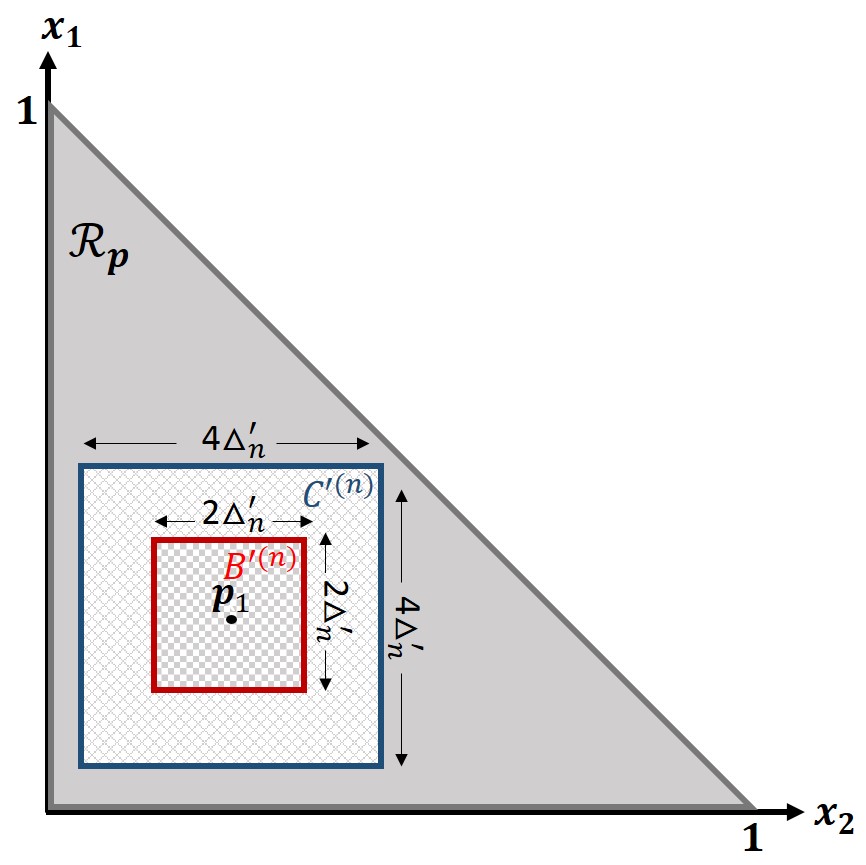}
  \centering
  \caption{$\textbf{p}_1$, sets $B'^{(n)}$ and $C'^{(n)}$ in $\mathcal{R}_\textbf{p}$ for case $r=3$.}
  \label{fig:rpp}
\end{figure}

Define sets $B'^{(n)}$ and $C'^{(n)}$ as
\begin{align}
 \no B'^{(n)}\triangleq & \left\{(x_1,x_2, \cdots ,x_{r-1}) \in \mathcal{R}_{\textbf{p}}: p_1(i)-\Delta'_n \leq x_i \leq p_1(i)+\Delta'_n,\ i=1,2, \cdots, r-1\right\},
\end{align}
\begin{align}
\no C'^{(n)}\triangleq &\left\{(x_1,x_2, \cdots ,x_{r-1}) \in \mathcal{R}_{\textbf{p}}: p_1(i)-2 \Delta'_n \leq x_i \leq p_1(i)+2 \Delta'_n,\ i=1,2, \cdots ,r-1\right\},
\end{align}
where $\Delta'_n = \frac{1}{n^{\frac{1}{r-1}+\frac{\alpha}{4}}}.$ Figure \ref{fig:rpp} shows $\textbf{p}_1$ and sets  $B'^{(n)}$ and $C'^{(n)}$ for the case $r=3.$

We claim for $m =cn^{\frac{2}{r-1} +  \alpha}$ and large enough $n$,
\begin{enumerate}
\item $P\left( \overline{\textbf{Y}_{\Pi(1) }}\in B'^{(n)}\right) \rightarrow 1$.
\item $P\left( \bigcup\limits_{u=2}^{n} \left(\overline{\textbf{Y}_{\Pi(u)}}\in B'^{(n)}\right)\right) \rightarrow 0.$
\end{enumerate}
The proof follows that for the two-states case. Thus, the adversary can de-anonymize the data and then recover $X_1(k)$ with vanishing error probability in the $r$-states model.

\subsection{Markov Chain Model} 
\label{subsec:markov}

So far, we have assumed users' data samples can have $r$ possibilities $\left(0, 1, \cdots, r-1\right)$ and users' pattern are i.i.d.\ . Here we model users' pattern using Markov chains to capture the dependency of the users' pattern over time. Again, we assume there are $r$ possibilities (the number of states in the Markov chains). Let $E$ be the set of edges. More specifically, $(i, l) \in E$ if there exists an edge from $i$ to $l$ with probability $ p(i,l)>0 $. What distinguishes different users is their transition probabilities $p_u(i,l)$ (the probability that user $u$ jumps from state $i$ to state $l$). The adversary knows the transition probabilities of all users. The model for obfuscation and anonymization is exactly the same as before.

We show that the adversary will be able to estimate the data samples of the users with low error probability if $m(n)$ and $a_n$ are in the appropriate range. The key idea is that the adversary can focus on a subset of the transition probabilities that are sufficient for recovering the entire transition probability matrix. By estimating those transition probabilities from the observed data and matching with the known transition probabilities of the users, the adversary will be able to first de-anonymize the data, and then estimate the actual samples of users' data. In particular, note that for each state $i$, we must have
\[\sum\limits_{l=1}^{r} p_u(i,l)=1,   \  \  \textrm{ for each }u \in \{1,2,\cdots, n \}, \]
so, the Markov chain of user $u$ is completely determined by a subset of size $d=|E|-r$ of transition probabilities. We define the vector $\textbf{p}_u$ and the matrix $\textbf{p}$ as
\[\textbf{p}_u= \begin{bmatrix}
p_u(1) \\ p_u(2) \\ \vdots \\p_u(|E|-r) \end{bmatrix} , \ \ \  \textbf{p} =\left[ \textbf{p}_{1}, \textbf{p}_{2}, \cdots,  \textbf{p}_{n}\right].
\]

We also consider $\textbf{p}_u$'s are drawn independently from some continuous density function, $f_P(\textbf{p}_u)$, which has support on a subset of the $(0,1)^{|E|-r}$ hypercube. Let $\mathcal{R}_{\textbf{p}} \subset \mathbb{R}^{d}$ be the range of acceptable values for $\textbf{p}_{u}$, so we have
\begin{align}
\no  \mathcal{R}_{\textbf{P}} &= \left\{ (x_1,x_2 \cdots, x_{d}) \in (0,1)^{d}: x_i > 0 , x_1+x_2+\cdots+x_{d} < 1,\ \ i=1,2,\cdots, d\right\}.
\end{align}
As before, we assume there are $ \delta_1, \delta_2 >0$, such that:
\begin{equation}
\begin{cases}
\no    \delta_1<f_{\textbf{P}}(\textbf{p}_u) <\delta_2, & \textbf{p}_u \in \mathcal{R}_{\textbf{p}}.\\
    f_{\textbf{P}}(\textbf{p}_u)=0, &  \textbf{p}_u \notin \mathcal{R}_{\textbf{p}}.
\end{cases}
\end{equation}

Using the above observations, we can establish the following theorem.

\begin{thm}\label{markov_thm}
For an irreducible, aperiodic Markov chain with $r$ states and $|E|$ edges as defined above, if $\textbf{Z}$ is the obfuscated version of $\textbf{X}$, and $\textbf{Y}$ is the anonymized version of $\textbf{Z}$, and
\begin{itemize}
 \item $m =cn^{\frac{2}{|E|-r} +  \alpha}$ for any $c>0$ and $\alpha>0$;
\item $R_u \sim Uniform [0, a_n]$, where $a_n \triangleq c'n^{-\left(\frac{1}{|E|-r}+\beta \right)}$ for any $c'>0$ and $\beta>\frac{\alpha}{4}$;
\end{itemize}
then, the adversary can successfully identify the data of user $1$ as $n$ goes to infinity. 
\end{thm}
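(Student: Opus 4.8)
The plan is to follow the converse argument for the i.i.d.\ $r$-states model (Theorem~\ref{r_state_thm_converse}) almost verbatim, with empirical \emph{transition} frequencies replacing empirical symbol frequencies. Write $d\triangleq|E|-r$, and recall that the chain of user $u$ is pinned down by the $d$ free transition probabilities collected in $\textbf{p}_u=(p_u(1),\dots,p_u(d))$, where the $j$-th coordinate equals $p_u(i_j,l_j)$ for a fixed directed edge $(i_j,l_j)$. The adversary's detector is: for each pseudonym $v$, observe the length-$m$ string $\textbf{Y}_v=\textbf{Z}_{\Pi^{-1}(v)}$ and form, for $j=1,\dots,d$,
\[
\overline{Y_v}(j)=\frac{\abs*{\{k\le m-1:\,Y_v(k)=i_j,\ Y_v(k+1)=l_j\}}}{\abs*{\{k\le m-1:\,Y_v(k)=i_j\}}},
\]
assembling $\overline{\textbf{Y}}_v=(\overline{Y_v}(1),\dots,\overline{Y_v}(d))$. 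Put $\Delta_n\triangleq n^{-\left(\frac1d+\frac{\alpha}{4}\right)}$ and, in analogy with Figure~\ref{fig:rpp},
\[
B^{(n)}\triangleq\{\textbf{x}\in\mathcal{R}_{\textbf{p}}:\ \abs{x_j-p_1(j)}\le\Delta_n,\ j=1,\dots,d\},\quad
C^{(n)}\triangleq\{\textbf{x}\in\mathcal{R}_{\textbf{p}}:\ \abs{x_j-p_1(j)}\le2\Delta_n,\ j=1,\dots,d\}.
\]
The adversary declares $\Pi(1)$ to be the pseudonym $v$ with $\overline{\textbf{Y}}_v\in B^{(n)}$ and outputs $\widetilde{X_1(k)}=Y_v(k)$. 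By Lemma~\ref{lemx} we may condition throughout on the high-probability event that $\textbf{p}_1$ lies in a fixed compact subset of the interior of $\mathcal{R}_{\textbf{p}}$, so that every relevant transition probability and every stationary probability is bounded below by a constant $\pi_{\min}>0$ and mixing times are $O(1)$, uniformly over users.

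It remains, as in Theorem~\ref{r_state_thm_converse}, to prove (i) $P\!\left(\overline{\textbf{Y}}_{\Pi(1)}\in B^{(n)}\right)\to1$ and (ii) $P\!\left(\bigcup_{u=2}^n\{\overline{\textbf{Y}}_{\Pi(u)}\in B^{(n)}\}\right)\to0$; since $Z_1(k)=X_1(k)$ with probability $1-R_1\ge1-a_n=1-o(1)$, these give $P^{*}_e(1,k)\to0$. For (i): the obfuscated process $\{Z_1(k)\}$ is a hidden Markov chain whose underlying chain is ergodic, so by the ergodic theorem $\overline{Y_{\Pi(1)}}(j)$ tends to the stationary value $P(Z_1(k+1)=l_j\mid Z_1(k)=i_j)$, and a short computation using $R_1\le a_n$ together with $\pi_{\min}>0$ shows this value equals $p_1(i_j,l_j)+O(a_n)$. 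A Chernoff/Azuma-type concentration bound for additive functionals of ergodic (hidden) Markov chains, together with the fact that the number of visits to $i_j$ is $\Theta(m)$ with overwhelming probability, gives for every $j$
\[
P\!\left(\abs*{\overline{Y_{\Pi(1)}}(j)-p_1(i_j,l_j)}>\tfrac{\Delta_n}{2}+O(a_n)\right)\le 2e^{-\Theta\left(m\Delta_n^2\right)}=2e^{-\Theta\left(n^{\alpha/2}\right)},
\]
using $m\Delta_n^2=c\,n^{2/d+\alpha}\cdot n^{-2/d-\alpha/2}=c\,n^{\alpha/2}$. Since $\beta>\frac{\alpha}{4}$ forces $a_n=c'n^{-(1/d+\beta)}=o(\Delta_n)$, the $O(a_n)$ bias is negligible and a union bound over $j=1,\dots,d$ yields (i).

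For (ii): first, continuity and boundedness of $f_{\textbf{P}}$ give $P(\textbf{p}_u\in C^{(n)})\le\delta_2(4\Delta_n)^d$, so by the union bound
\[
P\!\left(\bigcup_{u=2}^n\{\textbf{p}_u\in C^{(n)}\}\right)\le(n-1)\,\delta_2(4\Delta_n)^d=\Theta\!\left(n\cdot n^{-(1+d\alpha/4)}\right)=\Theta\!\left(n^{-d\alpha/4}\right)\to0,
\]
so with high probability all $\textbf{p}_u$, $u\ge2$, lie outside $C^{(n)}$. Second, applying the concentration bound of (i) simultaneously to all $n$ users (a union bound of $n$ terms each $\le2d\,e^{-\Theta(n^{\alpha/2})}$ still vanishes) shows that, with high probability, $\abs*{\overline{Y_{\Pi(u)}}(j)-p_u(i_j,l_j)}\le\Delta_n$ for every $u$ and every $j$. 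On the intersection of these two high-probability events, each $u\ge2$ has $\overline{\textbf{Y}}_{\Pi(u)}$ within $\Delta_n$ (coordinatewise) of a point outside $C^{(n)}$, hence outside $B^{(n)}$, which is (ii). Claims (i)--(ii) imply the adversary identifies $\Pi(1)$ correctly with probability $\to1$, inverts the anonymization to obtain $\textbf{Z}_1$, and reads off $X_1(k)=Z_1(k)$ with probability $1-o(1)$, so user $1$ has no privacy.

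The main obstacle is the concentration step underlying (i)--(ii): because obfuscation acts i.i.d.\ per symbol, $\{Z_u(k)\}$ is a hidden Markov chain rather than a Markov chain, so a plain Markov-chain inequality cannot be invoked directly. The fix is (a) to note that the \emph{joint} chain $\{(X_u(k),Z_u(k))\}$ is itself irreducible, aperiodic, and mixing in $O(1)$ time (inherited from the irreducible aperiodic $X$-chain, with the $Z$-component equilibrating in one step), so Markov concentration applies to additive functionals of its trajectory; (b) to verify that the limiting empirical transition frequency is $p_u(i_j,l_j)+O(a_n)$, i.e.\ that obfuscation introduces only an $O(a_n)$ bias, which is where the lower bound $\pi_{\min}>0$ from the conditioning is used; and (c) to handle the ratio defining $\overline{Y_v}(j)$ by separately showing its denominator concentrates around $\pi_u(i_j)\,m=\Theta(m)$. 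The remaining pieces --- the two union bounds over users and the final inversion step $\widetilde{X_1(k)}=Z_1(k)=X_1(k)$ with probability $1-o(1)$ --- are routine and mirror the proof of Theorem~\ref{two_state_thm_converse}.
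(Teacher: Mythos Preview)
Your outline is correct and matches the paper's sketch closely: same $\Delta_n=n^{-(1/d+\alpha/4)}$, same boxes $B^{(n)}$ and $C^{(n)}$, same two claims (i)--(ii), the same observation that obfuscation perturbs the target by only $O(a_n)=o(\Delta_n)$ (the paper records the per-transition error rate as $R'_u=R_u(2-R_u)$), and the same final step $Z_1(k)=X_1(k)$ with probability $1-o(1)$. The one substantive difference is how the dependence that obfuscation creates between adjacent observed transitions is handled. You note that $\{(X_u(k),Z_u(k))\}$ is itself an ergodic Markov chain and invoke a Chernoff/Azuma bound for its additive functionals. The paper instead uses an elementary thinning device: the adversary discards even-numbered transitions and keeps only the 1st, 3rd, 5th, \ldots; since no two retained transitions share a corrupted symbol, their obfuscation errors are independent, and conditioned on the visit counts $M_i$ the out-transitions from each state are exactly multinomial, so the i.i.d.\ Chernoff argument of Theorem~\ref{r_state_thm_converse} applies verbatim at the harmless cost of replacing $m$ by $m/2$. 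This buys simplicity and avoids any appeal to mixing-time-dependent Markov concentration constants. Two small slips in your write-up: Lemma~\ref{lemx} is a mutual-information lemma for the achievability direction and has no role in a converse (just fix $\textbf{p}_1$ and argue pointwise); and conditioning on $\textbf{p}_1$ lying in a compact set says nothing about the chains of users $u\ge 2$, so it cannot by itself yield bounds ``uniformly over users'' for your step~(ii). The paper, being a sketch, is also not fully explicit on this point, but its reduction to conditional multinomials at least makes the needed concentration independent of each user's spectral gap.
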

The proof has a lot of similarity to the i.i.d.\ case, so we provide a sketch, mainly focusing on the differences. We argue as follows. If the total number of observations per user is $m=m(n)$, then define $M_i(u)$ to be the total number of visits by user $u$ to state $i$, for $i=0, 1, \cdots, r-1$. Since the Markov chain is irreducible and aperiodic, and $m(n) \rightarrow \infty$, all $\frac{M_i(u)}{m(n)}$ converge to their stationary values. Now conditioned on $M_i(u)=m_i(u)$, the transitions from state $i$ to state $l$ for user $u$ follow a multinomial distribution with probabilities $p_u(i,l)$.

Given the above, the setting is now very similar to the i.i.d.\ case. Each user is uniquely characterized by a vector $\textbf{p}_u$ of size $|E|-r$. We define sets $B^{''(n)}$ and $C^{''(n)}$ as
\[
 B^{''(n)}\triangleq \{(x_1, x_2, \cdots ,x_{d}) \in \mathcal{R}_{\textbf{p}}: p_1(i)-\Delta''_n \leq x_i \leq p_1(i)+\Delta''_n, i=1,2 , \cdots ,d\},
\]
\[
 C^{''(n)}\triangleq \{(x_1, x_2, \cdots ,x_{d}) \in \mathcal{R}_{\textbf{p}}:
  p_1(i)-2 \Delta''_n < x_i < p_1(i)+2 \Delta''_n, i=1,2, \cdots, d\},
\]
where $\Delta''_n= \frac{1}{n^{\frac{1}{|E|-r}+\frac{\alpha}{4}}}$, and $d= |E|-r$. Then, we can show that for the stated values of $m(n)$ and $a_n$, as $n$ becomes large:
\begin{enumerate}
\item $P\left( \overline{\textbf{Y}_{\Pi(1) }}\in B^{''(n)}\right) \rightarrow 1$,
\item $P\left( \bigcup\limits_{u=2}^{n} \left(\overline{\textbf{Y}_{\Pi(u)}}\in B{''}^{(n)}\right)\right) \rightarrow 0$,
\end{enumerate}
which means that the adversary can estimate the data of user $1$ with vanishing error probability. The proof is very similar to the proof of the i.i.d.\ case; however, there are two differences that need to be addressed:

First, the probability of observing an erroneous observation is not exactly given by $R_u$. In fact, a transition is distorted if at least one of its nodes is distorted. So, if the actual transition is from state $i$ to state $l$, then the probability of an erroneous observation is equal to
\begin{align}
	\no R'_u&=R_u+R_u-R_uR_u =R_u(2-R_u).
\end{align}

Nevertheless, here the order only matters, and the above expression is still in the order of $a_n =O \left( n^{-\left(\frac{1}{|E|-r}+\beta \right)} \right)$.

The second difference is more subtle. As opposed to the i.i.d.\ case, the error probabilities are not completely independent. In particular, if $X_u(k)$ is reported in error, then both the transition to that state and from that state are reported in error. This means that there is a dependency between errors of adjacent transitions. We can address this issue in the following way:  The adversary makes his decision only based on a subset of the observations. More specifically, the adversary looks at only odd-numbered transitions: First, third, fifth, etc., and ignores the even-numbered transitions.  In this way, the number of observations is effectively reduced from $m$ to $\frac{m}{2}$ which again does not impact the order of the result (recall that the Markov chain is aperiodic). However, the adversary now has access to observations with independent errors.

\section{Perfect Privacy Analysis: Markov Chain Model}\label{sec:perfect-MC}
So far, we have provided both achievability and converse results for the i.i.d.\ case. However, we have only provided the converse results for the Markov chain case. Here, we investigate achievability for Markov chain models. It turns out that for this case, the assumed obfuscation technique is not sufficient to achieve a reasonable level of privacy. Loosely speaking, we can state that if the adversary can make enough observations, then he can break the anonymity. The culprit is the fact that the sequence observed by the adversary is no longer modeled by a Markov chain; rather, it can be modeled by a hidden Markov chain. This allows the adversary to successfully estimate the obfuscation random variable $R_u$ as well as the $ p_u(i,l)$ values for each sequence, and hence successfully de-anonymize the sequences.

More specifically, as we will see below, there is a fundamental difference between the i.i.d.\ case and the Markov chain case. In the i.i.d.\ case, if the noise level is beyond a relatively small threshold, the adversary will be unable to de-anonymize the data and unable to recover the actual values of the data sets for users, \emph{regardless of the (large) size of $m=m(n)$}. On the other hand, in the Markov chain case, if $m=m(n)$ is large enough, then the adversary can easily de-anonymize the data. To better illustrate this, let's consider a simple example.

\begin{example}
Consider the scenario where there are only two states and the users' data samples change between the two states according to the Markov chain shown in Figure \ref{fig:MC-diagram}. What distinguishes the users is their different values of $p$. Now, suppose we use the same obfuscation method as before. That is, to create a noisy version of the sequences of data samples, for each user $u$, we generate the random variable $R_u$ that is the probability that the data sample of the user is changed to a different data sample by obfuscation. Specifically,
\[
{Z}_{u}(k)=\begin{cases}
{X}_{u}(k), & \textrm{with probability } 1-R_u.\\
1-{X}_{u}(k),& \textrm{with probability } R_u.
\end{cases}
\]

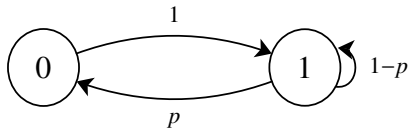
\begin{figure}[H]
\begin{center}
\[
\SelectTips {lu}{12 scaled 2500}
\xymatrixcolsep{6pc}\xymatrixrowsep{5pc}\xymatrix{
*++[o][F]{0}  \ar@/^1pc/[r]^{1}
& *++[o][F]{1} \ar@(dr,ur)[]_{1-p}  \ar@/^1pc/[l]^{p}
}
\]
\caption{A state transition diagram.}\label{fig:MC-diagram}
\end{center}
\end{figure}
To analyze this problem, we can construct the underlying Markov chain as follows. Each state in this Markov chain is identified by two values: the real state of the user, and the observed value by the adversary. In particular, we can write
\[\left(\text{Real value}, \text{Observed value}\right) \in \left\{\right(0,0), (0,1), (1,0), (1,1)\}.\]
Figure \ref{fig:MC-diagram2} shows the state transition diagram of this new Markov chain.

\begin{figure}[H]
\begin{center}
\[
\SelectTips {lu}{12 scaled 2000}
\xymatrixcolsep{10pc}\xymatrixrowsep{8pc}\xymatrix{
*++[o][F]{00} \ar@//[d]^{R} \ar@/_1pc/[dr]_>>>>>{1-R}
& *++[o][F]{01} \ar@//[d]^{1-R} \ar@/^1pc/[dl]^>>>>>{R} \\
*++[o][F]{10} \ar@(d,l)[]^{(1-p)R} \ar@/_2pc/[r]_{(1-p)(1-R)} \ar@/^2pc/[u]^{p(1-R)} \ar@/^1pc/[ur]^>>>>>{pR}
& *++[o][F]{11} \ar@(d,r)[]_{(1-p)(1-R)}  \ar@//[l]^{(1-p)R} \ar@/_2pc/[u]_{pR} \ar@/_1pc/[ul]_>>>>>{p(1-R)}
}
\]
\caption{The state transition diagram of the new Markov chain.}\label{fig:MC-diagram2}
\end{center}
\end{figure}
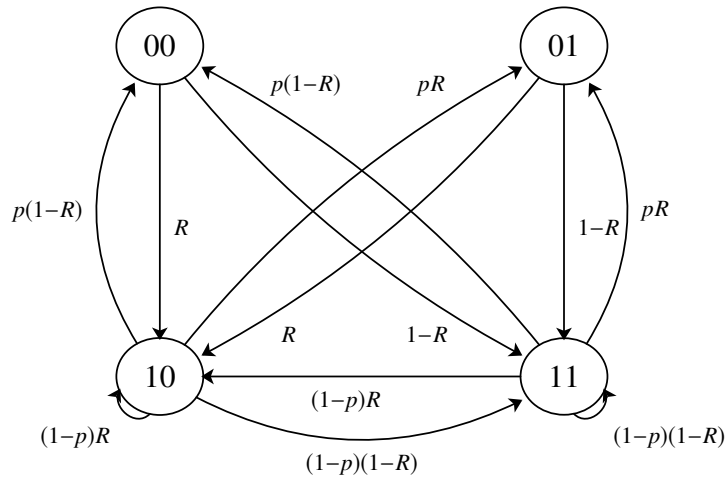
 We know
 \[ \pi_{00}=\pi_0(1-R)=\frac{p}{1+p}(1-R).\]
 \[ \pi_{01}=\pi_0R= \frac{p}{1+p}R.\]
 \[ \pi_{10}=\pi_1R=\frac{1}{1+p}R.\]
 \[ \pi_{11}=\pi_1(1-R)=\frac{1}{1+p}(1-R).\]

The observed process by the adversary is not a Markov chain; nevertheless, we can define limiting probabilities. In particular, let $\theta_0$ be the limiting probability of observing a zero. That is, we have
\[
\frac{M_0}{m} \xrightarrow{d} \theta_0,  \ \ \textrm{ as }n \rightarrow \infty,
\]
where $m$ is the total number of observations by the adversary, and $M_0$ is the number of $0$'s observed. Then,
\[\theta_0= \pi_{00}+\pi_{10} =\frac{(1-R)p+R}{1+p}.\]
Also, let $\theta_1$ be the limiting probability of observing a one, so
\[\theta_1= \pi_{01}+\pi_{11} =\frac{pR+(1-R)}{1+p}=1-\theta_0.\]

Now the adversary's estimate of $\theta_0$ is given by:
\begin{align}\label{eq1}
\hat{\theta}_0= \frac{(1-R)p+R}{1+p}.
\end{align}
Note that if the number of observations by the adversary can be arbitrarily large, the adversary can obtain an arbitrarily accurate estimate of $\theta_0$.
The adversary can obtain another equation easily, as follows.   Let $\theta_{01}$ be the limiting value of the portion of transitions from state $0$ to $1$ in the chain observed by the adversary. We can write
\begin{align}
\no \theta_{01} &=P \left\{(00\rightarrow 01), (00\rightarrow 11), (10 \rightarrow 01), (10 \rightarrow 11) \right\}\\\
\nonumber &= \pi_{00}(1-R)+\pi_{10}PR+ \pi_{10}(1-p)(1-R).\ \
\end{align}
As a result,
\begin{align}\label{eq2}
\hat{\theta}_{01}= \frac{p(1-R)^2+R\left(PR(1-R)(1-p)\right)}{1+p}.
\end{align}
Again, if the number of observations can be arbitrarily large, the adversary can obtain an arbitrarily accurate estimate of $\theta_{01}$.
By solving the Equations \ref{eq1} and \ref{eq2}, the adversary can successfully recover $R$ and $p$; thus, he/she can successfully determine the users' data values.

\end{example}



\section{Discussion}
\label{discussion}
\subsection{Markov Chain Model}
\label{markov}
As opposed to the i.i.d.\ case, we see from Section \ref{sec:perfect-MC} that if we do not limit $m=m(n)$, the assumed obfuscation method will not be sufficient to achieve perfect privacy. There are a few natural questions here. First, for a given noise level, what would be the maximum $m(n)$ that could guarantee perfect privacy in this model? The more interesting question is, how can we possibly modify the obfuscation technique to make it more suitable for the Markov chain model? A natural solution seems to be re-generating the obfuscation random variables $R_u$ periodically. This will keep the adversary from easily estimating them by observing a long sequence of data at a small increase in complexity. In fact, this will make the obfuscation much more \emph{robust} to modeling uncertainties and errors. It is worth noting, however, that this change would not affect the other results in the paper. That is, even if the obfuscation random variables are re-generated frequently, it is relatively easy to check that all the previous theorems in the paper remain valid. However, the increase in robustness to modeling errors will definitely be a significant advantage. Thus, the question is how often should the random variable $R_u$ be re-generated to strike a good balance between complexity and privacy? These are all interesting questions for future research.

\subsection{Obfuscating the Samples of Users' Data Using Continuous Noise}
Here we argue that for the setting of this paper, continuous noise such as that drawn from a Gaussian distribution is not a good option to obfuscate the sample of users' data drawn from a finite alphabet when we want to achieve perfect privacy. For a better understanding, let's consider a simple example.
\begin{example}
Consider the scenario where the users' datasets are governed by an i.i.d.\ model and the number of possible values for each sample of the users' data ($r$) is equal to 2 (two-states model). Note that the data sequence for user $u$ is a Bernoulli random variable with parameter $p_u$.

Assume that the actual sample of the data of user $u$ at time $k$ ($X_u(k)$) is obfuscated using noise drawn from a Gaussian distribution ($S_u(k)$), and $Z_u(k)$ is the obfuscated version of $X_u(k)$.  That is, we can write
\[Z_u(k)=X_u(k)+S_u(k); \ \ \ \ \  S_u(k) \sim N\left(0, R_u\right), \]
where $R_u$ is chosen from some distribution. For simplicity, we can consider $R_u\sim N\left(0, a^2_n\right)$ where $a_n$ is the noise level.

We also apply anonymization to $Z_u(k)$, and, as before, $Y_u(k)$ is the reported sample of the data of user $u$ at time $k$ after applying anonymization.  Per Section \ref{sec:framework}, anonymization is modeled by a random permutation $\Pi(u)$ on the set of $n$ users.

Now, the question is as follows: Is it possible to achieve perfect privacy independent of the number of adversary's observation ($m$) while using this continuous noise ($S_u(k)$) to obfuscate the sample of users' data?

Note that the density function of the reported sample of the data of user $u$ after applying obfuscation is
\begin{align}
\no f_{Z_u}(z)&= p_u f_{S_u(k)}(z-1)+(1-p_u) f_{S_u(k)}(z) \\
\nonumber &= p_u \frac{1}{\sqrt{2\pi}R_u}e^{-\frac{(z-1)^2}{2R_u}}+(1-p_u)\frac{1}{\sqrt{2\pi}R_u}e^{-\frac{z^2}{2R_u}}.\ \
\end{align}
In this case, when the adversary's number of observations is large, the adversary can estimate the values of $P_u$ and $R_u$ for each user with an arbitrarily small error probability. As a result, the adversary can de-anonymize the data and then recover $X_u(k)$. The conclusion here is that a continuous noise distribution gives too much information to the adversary when used for obfuscation of finite alphabet data. A method to remedy this issue is to regenerate the random variables $R_u$ frequently (similar to our previous discussion for Markov chains). Understanding the optimal frequency of such a regeneration and detailed analysis in this case is an interesting future research direction.
\end{example}


\subsection{Relation to Differential Privacy}
Differential privacy is mainly used when there is a statistical database of users' sensitive information, and the goal is to protect an individual's data while publishing aggregate information about the database \cite{ lee2012differential, bordenabe2014optimal, chatzikokolakis2015geo, nguyen2013differential, machanavajjhala2008privacy, kousha2}. The goal of differential privacy is publishing aggregate queries with low sensitivity, which means the effect of changes in a single individual on the outcome of the aggregated information is negligible.

In \cite{geo2013} three different approaches for differential privacy are presented. The one that best matches our setting is stated as
\begin{align*}
\frac{P(X_1(k)=x_1|\textbf{Y})}{P(X_1(k)=x_2|\textbf{Y})} \leq e^{\epsilon_r}\frac{P(X_1(k)=x_1)}{P(X_1(k)=x_2)},
\end{align*}
where $\textbf{Y}$ is the set of reported datasets.
It means $\textbf{Y}$ has a limited effect on the probabilities assigned by the attacker. In differential privacy, user $1$ has strongest differential privacy when $\epsilon_r=0$.

In Lemma \ref{lem4}, we proved that if user $1$ has perfect privacy, this implies that asymptotically (for large enough $n$)
\begin{align}\label{eq1}
P\left(X_1(k)=x_1 \big{|} \textbf{Y}\right)\rightarrow P\left(X_1(k)=x_1\right).
\end{align}
\begin{align}\label{eq2}
P\left(X_1(k)=x_2 \big{|} \textbf{Y}\right)\rightarrow P\left(X_1(k)=x_2\right).
\end{align}

As a result, by using (\ref{eq1}) and (\ref{eq2}), we can conclude that if we satisfy the perfect privacy condition given in this paper, we also satisfy differential privacy with $\epsilon_r=0$, i.e., the strongest case of differential privacy.

\section{Conclusions}
In this paper, we have considered both obfuscation and anonymization techniques to achieve privacy.  The privacy level of the users depends on both $m(n)$ (number of observations per user by the adversary for a fixed anonymization mapping) and $a_n$ (noise level). That is, larger $m(n)$ and smaller $a_n$ indicate weaker privacy.  We characterized the limits of privacy in the entire $m(n)-a_n$ plane for the i.i.d.\ case; that is, we obtained the exact values of the thresholds for $m(n)$ and $a_n$ required for privacy to be maintained.  We showed that if $m(n)$ is fewer than $O\left(n^{\frac{2}{r-1}}\right)$, or $a_n$ is larger than $\Omega\left(n^{-\frac{1}{r-1}}\right)$, users have perfect privacy. On the other hand, if neither of these two conditions is satisfied, users have no privacy. For the case where the users' patterns are modeled by Markov chains, we obtained a no-privacy region in the $m(n)-a_n$ plane.

Future research in this area needs to characterize the exact privacy/no-privacy regions when user data sequences obey Markov models. It is also important to consider different ways to obfuscate users' data sets and study the utility-privacy trade-offs for different types of obfuscation techniques.

\appendices

\section{Lemma \ref{lemx} and its Proof}
\label{sec:app_a}
Here we state that we can condition on high-probability events.

\begin{lem}
	\label{lemx}
	Let $p \in (0,1)$, and $X \sim Bernoulli (p)$ be defined on a probability space $(\Omega, \mathcal{F}, P)$. Consider $B_1, B_2, \cdots $ be a sequence of events defined on the same probability space such that $P(B_n) \rightarrow 1$ as $n$ goes to infinity. Also, let $\textbf{Y}$ be a random vector (matrix) in the same probability space, then:
	\[I(X; \textbf{Y}) \rightarrow 0\ \ \text{iff}\ \  I(X; \textbf{Y} {|} B_n) \rightarrow 0. \]
\end{lem}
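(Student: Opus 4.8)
The plan is to pass to the indicator variable $T_n \triangleq \mathbf{1}_{B_n}$, a Bernoulli random variable with parameter $P(B_n)\to 1$, and to compare $I(X;\textbf{Y})$ with $I\left(X;\textbf{Y}\mid B_n\right)$ by expanding the joint mutual information $I(X;\textbf{Y},T_n)$ via the chain rule in two orders. Revealing $T_n$ first gives
\[
I(X;\textbf{Y},T_n) = I(X;T_n) + I\left(X;\textbf{Y}\mid T_n\right),
\]
and, since $T_n$ is binary, $I\left(X;\textbf{Y}\mid T_n\right) = P(B_n)\,I\left(X;\textbf{Y}\mid B_n\right) + \left(1-P(B_n)\right) I\left(X;\textbf{Y}\mid B_n^c\right)$. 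Revealing $\textbf{Y}$ first gives
\[
I(X;\textbf{Y},T_n) = I(X;\textbf{Y}) + I\left(X;T_n\mid \textbf{Y}\right).
\]
Equating the two yields the identity I would work with:
\[
I(X;\textbf{Y}) + I\left(X;T_n\mid \textbf{Y}\right) = I(X;T_n) + P(B_n)\,I\left(X;\textbf{Y}\mid B_n\right) + \left(1-P(B_n)\right) I\left(X;\textbf{Y}\mid B_n^c\right).
\]

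Next I would bound the three auxiliary terms. The terms $I(X;T_n)$ and $I\left(X;T_n\mid\textbf{Y}\right)$ are each at most $H(T_n)$, the binary entropy of $P(B_n)$, which tends to $0$ by continuity since $P(B_n)\to 1$. For the last term I would use the hypothesis that $X$ is $\mathrm{Bernoulli}(p)$: since $X$ ranges over a two-element alphabet, $I\left(X;\textbf{Y}\mid B_n^c\right)\le H\left(X\mid B_n^c\right)\le \log 2$, so $\left(1-P(B_n)\right) I\left(X;\textbf{Y}\mid B_n^c\right)\le \left(1-P(B_n)\right)\log 2\to 0$. This is the one place the finite-alphabet assumption is essential — without a bound on $H(X)$, the vanishing-probability event $B_n^c$ could in principle carry a non-vanishing amount of information. (All the mutual informations above are well defined because in our setting $\textbf{Y}$ takes values in a finite set.)

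Given the identity and these bounds, both implications are immediate, and I do not anticipate a real obstacle. If $I(X;\textbf{Y})\to 0$, then the left-hand side tends to $0$, hence so does the right-hand side; as the two auxiliary terms on the right vanish and $P(B_n)\to 1$, this forces $I\left(X;\textbf{Y}\mid B_n\right)\to 0$. Conversely, if $I\left(X;\textbf{Y}\mid B_n\right)\to 0$, the right-hand side tends to $0$, and since $I\left(X;T_n\mid\textbf{Y}\right)\ge 0$ we get $I(X;\textbf{Y})\le I(X;\textbf{Y})+I\left(X;T_n\mid\textbf{Y}\right)\to 0$. The only points requiring care are the routine justification of the two-order chain-rule expansion and the continuity estimate $H(T_n)\to 0$; the conceptual content lies entirely in invoking $H(X)\le\log 2$ to neutralize the conditioning on the rare complement $B_n^c$.
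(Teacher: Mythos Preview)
Your proof is correct and takes a genuinely different route from the paper's. The paper argues directly at the level of probabilities: it shows $P(X=1\mid B_n)\to P(X=1)$ and, pointwise in $\textbf{y}$, $P(X=1\mid \textbf{Y}=\textbf{y},B_n)\to P(X=1\mid \textbf{Y}=\textbf{y})$, then passes to entropies to conclude $I(X;\textbf{Y}\mid B_n)-I(X;\textbf{Y})\to 0$. Your argument instead introduces the indicator $T_n=\mathbf{1}_{B_n}$ and compares the two chain-rule expansions of $I(X;\textbf{Y},T_n)$, controlling the discrepancy via $H(T_n)\to 0$ and the crude bound $I(X;\textbf{Y}\mid B_n^c)\le \log 2$.

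Your approach is cleaner and more self-contained: all the limiting statements reduce to $H(T_n)\to 0$ and $(1-P(B_n))\log 2\to 0$, which are immediate. The paper's pointwise argument, by contrast, tacitly needs a uniform or dominated-convergence step to pass from $H(X\mid \textbf{Y}=\textbf{y},B_n)\to H(X\mid \textbf{Y}=\textbf{y})$ to convergence of the averaged conditional entropies (and must also account for the shift in the law of $\textbf{Y}$ under conditioning on $B_n$), which it does not spell out. Your explicit use of the finite alphabet of $X$ to bound $I(X;\textbf{Y}\mid B_n^c)$ is exactly the right way to neutralize the rare event, and it makes transparent where the Bernoulli hypothesis enters.
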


\begin{proof}
First, we prove that as n becomes large,
\begin{align}\label{eq:H1}
H(X {|} B_n)- H(X) \rightarrow 0.
\end{align}

Note that as $n$ goes to infinity,
\begin{align}
\no P\left(X=1\right) &=P\left(X=1 \bigg{|} B_n\right) P\left(B_n\right) + P\left(X=1 \bigg{|} \overline{B_n}\right) P\left(\overline{B_n}\right)\\
\no &=P\left(X=1 \bigg{|} B_n\right),\ \
\end{align}
thus,
$\left(X \bigg{|} B_n\right) \xrightarrow{d} X$, and as $n$ goes to infinity,
 \[H\left(X {|} B_n\right)- H(X) \rightarrow 0.\]
Similarly, as $n$ becomes large,
\[ P\left(X=1 \bigg{|} \textbf{Y}=\textbf{y}\right) \rightarrow P\left(X=1 \bigg{|} \textbf{Y}=\textbf{y}, B_n\right),\ \
\]
and
\begin{align}\label{eq:H2}
H\left(X {|}  \textbf{Y}=\textbf{y},B_n\right)- H\left(X {|} \textbf{Y}=\textbf{y}\right) \rightarrow 0.
\end{align}
Remembering that
\begin{align}\label{eq:H3}
I\left(X; \textbf{Y}\right)=H(X)-H(X {|} \textbf{Y}),
\end{align}
and using (\ref{eq:H1}), (\ref{eq:H2}), and (\ref{eq:H3}), we can conclude that as  $n$ goes to infinity,
\[I\left(X;\textbf{Y} {|} B_n\right) - I\left(X,\textbf{Y}\right) \rightarrow 0.\]
As a result, for large enough $n$,
\[I\left(X; \textbf{Y}\right) \rightarrow 0 \Longleftrightarrow I\left(X; \textbf{Y} {|} B_n\right) \rightarrow 0. \]
\end{proof}

\section{Proof of Lemma \ref{lemOnePointFive}}
\label{sec:app_b}
Here we provide a formal proof for Lemma \ref{lemOnePointFive} which we restate as follows.

Let $N$ be a positive integer, and let $a_1, a_2, \cdots, a_N$ and $b_1, b_2, \cdots, b_N$ be real numbers such that $a_u \leq b_u$ for all $u$. Assume that $X_1, X_2, \cdots, X_N$ are $N$ independent random variables such that
\[X_u \sim Uniform[a_u,b_u]. \]
Let also $\gamma_1, \gamma_2, \cdots, \gamma_N$ be real numbers such that
\[ \gamma_j \in \bigcap_{u=1}^{N} [a_u, b_u] \ \ \textrm{for all }j \in \{1,2,\cdots,N\}. \]
Suppose that we know the event $E$ has occurred, meaning that the observed values of $X_u$'s is equal to the set of $\gamma_j$'s (but with unknown ordering), i.e.,
\[E \ \ \equiv \ \ \{X_1,X_2,\cdots,X_N\}= \{ \gamma_1, \gamma_2, \cdots, \gamma_N \}, \] then
\[P\left(X_1=\gamma_j |E\right)=\frac{1}{N}. \]

\begin{proof}
Define sets $\mathfrak{P}$ and $\mathfrak{P}_j$ as follows:
\[\mathfrak{P}= \textrm{The set of all permutations $\Pi$ on }\{1,2,\cdots,N\}. \]
\[\mathfrak{P}_j= \textrm{The set of all permutations $\Pi$ on }\{1,2,\cdots,N\} \textrm{ such that } \Pi(1)=j. \]
We have $|\mathfrak{P}|=N!$ and $|\mathfrak{P}|=(N-1)!$. Then
\begin{align*}
 P(X_1=\alpha_j |E)&=\frac{\sum_{\pi \in \mathfrak{P}_j} f_{X_1,X_2, \cdots,X_N} (\gamma_{\pi(1)}, \gamma_{\pi(2)}, \cdots, \gamma_{\pi(N)})} {\sum_{\pi \in \mathfrak{P}} f_{X_1,X_2, \cdots,X_N} (\gamma_{\pi(1)}, \gamma_{\pi(2)}, \cdots, \gamma_{\pi(N)})}\\
 &=\frac{(N-1)! \prod\limits_{u=1}^{N} \frac{1}{b_u-a_u}}{N! \prod\limits_{u=1}^{N} \frac{1}{b_u-a_u}}\\
 &=\frac{1}{N}.
\end{align*}
\end{proof}
\section{Proof of Lemma \ref{lem1}}
\label{sec:app_c}
Here, we provide a formal proof for Lemma \ref{lem1} which we restate as follows. The following lemma confirms that the number of elements in $J^{(n)}$ goes to infinity as $n$ becomes large.

	If $N^{(n)} \triangleq |J^{(n)}| $, then $N^{(n)} \rightarrow \infty$ with high probability as $n \rightarrow \infty$.  More specifically, there exists $\lambda>0$ such that
\[
	P\left(N^{(n)} > \frac{\lambda}{2}n^{\frac{\beta}{2}}\right) \rightarrow 1.
	\]

\begin{proof}
Define the events $A$, $B$ as
\[A \equiv  p_1\leq P_u\leq p_1+\epsilon_n\]
\[B \equiv p_1+\epsilon_n\leq Q_u\leq p_1+(1-2p_1)a_n.\]
Then, for $u \in \{1, 2, \dots, n\}$ and $0\leq p_1<\frac{1}{2}$:
	\begin{align}
	\no P\left(u\in J^{(n)}\right) &= P\left(A\ \cap \ B\right)\\
	\nonumber &= P\left(A\right) P\left(B \big{|}A \right). \ \
	\end{align}		
So, given $p_1 \in (0,1)$ and the assumption $0<\delta_1<f_p<\delta_2$, for $n$ large enough, we have
	\[
	P(A)  = \int_{p_1}^{ p_1+\epsilon_n}f_P(p) dp,
	\]
so, we can conclude that
	\[
	\epsilon_n\delta_1<P(A) <\epsilon_n\delta_2.
	\]
	We can find a $\delta$ such that $\delta_1<\delta<\delta_2$ and
\begin{equation}\label{eq:5}
	P( A) = \epsilon_n\delta.
\end{equation}
We know
	\[Q_u\bigg{|}P_u=p_u \sim Uniform \left[p_u,p_u+(1-2p_u)a_n\right],\]
	so, according to Figure \ref{fig:piqi_b}, for $p_1\leq p_u\leq p_1+\epsilon_n$,
	\begin{align}
	\no P\left(B | P_u=p_u\right ) &= \frac{p_1+(1-2p_1)a_n-p_1-\epsilon_n}{p_u+(1-2p_u)a_n-p_u} \\
	\nonumber &= \frac{(1-2p_1)a_n-\epsilon_n}{(1-2p_u)a_n}\\
	\nonumber &\geq \frac{(1-2p_1)a_n-\epsilon_n}{(1-2p_1)a_n} \\
	\nonumber &= 1- \frac{\epsilon_n}{(1-2p_1)a_n}, \ \
	\end{align}
which implies
\begin{align}
P\left(B | A\right ) \geq 1- \frac{\epsilon_n}{(1-2p_1)a_n}. \label{eq:6}
\end{align}
Using (\ref{eq:5}) and (\ref{eq:6}), we can conclude
	\[P\left(u\in J^{(n)}\right)\geq \epsilon_n\delta \left(1- \frac{\epsilon_n}{(1-2p_1)a_n}\right).\]
Then, we can say that $N^{(n)}$ has a binomial distribution with expected value of $N^{(n)}$ greater than $n\epsilon_n\delta \left(1- \frac{\epsilon_n}{(1-2p_1)a_n}\right)$, and by substituting $\epsilon_n$ and $a_n$, for any $c'>0$, we get
\[E\left[N^{(n)}\right] \geq \delta\left(n^{\frac{\beta}{2}}- \frac{1}{{c'(1-2p_1)}}\right)   \geq  \lambda n^{\frac{\beta}{2}}.\]

Now by using Chernoff bound, we have
\[P\left(N^{(n)} \leq (1- \theta) E\left[N^{(n)}\right]\right) \leq e^{-\frac{\theta^2}{2}E\left[N^{(n)}\right]},\]
so, if we assume $\theta=\frac{1}{2}$, we can conclude for large enough $n$,
\begin{align}
\nonumber P\left(N^{(n)} \leq \frac{\lambda}{2}n^{\frac{\beta}{2}}\right) &\leq P\left(N^{(n)} \leq \frac{E\left[N^{(n)}\right]}{2}\right)\\
\nonumber &\leq e^{-\frac{E[N^{(n)}]}{8}}\\
\nonumber &\leq e^{-\frac{\lambda n^{\frac{\beta}{2}}}{8}} \rightarrow 0.
\end{align}
As a result, $N^{(n)} \rightarrow \infty$ with high probability for large enough $n.$

\end{proof}
\section{Completion of the Proof of Lemma \ref{lem4}}\label{sec:app_d}

Let $p_1 \in (0,1)$, and let $N^{(n)}$ be a random variable as above, i.e., $N^{(n)}  \rightarrow \infty$ as $n \rightarrow \infty$. Consider the sequence of independent random variables $Y_u \sim Bernoulli (p_u)$ for $u=1, 2, \cdots, N^{(n)}$ such that
\begin{enumerate}
	\item For all $n$ and all $u \in \left\{1, 2, \cdots, N^{(n)}\right\}$, $\abs*{p_u-p_1} \leq \zeta_n.$
	\item $\lim\limits_{n\to\infty} \zeta_n =0.$
\end{enumerate}
Define
\[\overline{Y}\triangleq \frac{1}{N^{(n)}}\sum_{u=1}^{N^{(n)}}Y_u,\]
then $\overline{Y} \xrightarrow{d} p_1.$

\begin{proof}
Note
\begin{align}
\no E[\overline{Y}] &= \frac{1}{N^{(n)}}\sum_{u=1}^{N^{(n)}}p_u\\
\nonumber &\leq \frac{1}{N^{(n)}} \sum_{u=1}^{N^{(n)}}\left(p_1+\zeta_n\right)\\
\nonumber &= \frac{1}{N^{(n)}}\cdot N^{(n)}(p_1+\zeta_n) \\
\nonumber &= p_1+\zeta_n. \ \
\end{align}

Similarly we can prove $E\left[\overline{Y}\right]\geq p_1-\zeta_n$. Since as $n$ becomes large, $\zeta_n\rightarrow 0$ and $p_1 \in (0,1)$, we can conclude
\begin{align}\label{eq:H4}
\lim\limits_{n\to\infty} E\left[\overline{Y}\right]=p_1.
\end{align}
Also,
\begin{align}
	\no Var\left(\overline{Y}\right) &= \frac{1}{\left(N^{(n)}\right)^2}\sum_{u=1}^{N^{(n)}}p_u\left(1-p_u\right) \\
	\nonumber &\leq \frac{1}{(N^{(n)})^2} \sum_{u=1}^{N^{(n)}}\left(p_1+\zeta_n\right)\left(1-p_1+\zeta_n\right)\\
	\nonumber &= \frac{1}{(N^{(n)})^2}\cdot N^{(n)}\left(p_1+\zeta_n\right)\left(1-p_1+\zeta_n\right) \\
	\nonumber &= \frac{1}{N^{(n)}} \left(p_1+\zeta_n\right)\left(1-p_1+\zeta_n\right). \ \
\end{align}
Thus,
\begin{align}\label{eq:H5}
\lim_{n\to\infty} Var\left(\overline{Y}\right)=0.
\end{align}
By using (\ref{eq:H4}), (\ref{eq:H5}), and Chebyshev's inequality, we can conclude
\[\overline{Y}\xrightarrow{d} p_1.\]

\end{proof}

\bibliographystyle{IEEEtran}
\bibliography{REF}
%
%
%
%

 \end{document}